  \providecommand\BibTeX{{%
    \normalfont B\kern-0.5em{\scshape i\kern-0.25em b}\kern-0.8em\TeX}}}
\DeclareMathOperator{\E}{\mathbb{E}}
\newtheorem{lemm}{Lemma}
\begin{document}

\title{Multi Type Mean Field Reinforcement Learning}  


\author{Sriram Ganapathi Subramanian}
\authornote{Sriram did this work while he was an intern at Borealis AI.}
\affiliation{%
 \institution{University of Waterloo}
 \streetaddress{200 University Ave. West}
 \city{Waterloo} 
 \state{Ontario} 
 \postcode{N2L 3G1}
}
\email{s2ganapa@uwaterloo.ca}

\author{Pascal Poupart}
\affiliation{%
 \institution{Borealis AI}
 \streetaddress{420 Wes Graham Way}
 \city{Waterloo} 
 \state{Ontario} 
 \postcode{N2L 0J6}
}
\email{pascal.poupart@borealisai.com}

\author{Matthew E.\ Taylor}
\affiliation{%
 \institution{Borealis AI}
 \streetaddress{10020 101A Ave NW}
 \city{Edmonton} 
 \state{Alberta}
 }
\email{matthew.taylor@borealisai.com}

\author{Nidhi Hegde}
\affiliation{%
 \institution{Borealis AI}
 \streetaddress{10020 101A Ave NW}
 \city{Edmonton} 
 \state{Alberta}
 }
\email{nidhi.hegde@borealisai.com}

\renewcommand{\shortauthors}{Ganapathi Subramanian et al.}

\begin{abstract}
 Mean field theory provides an effective way of scaling multiagent reinforcement learning algorithms to environments with many agents that can be abstracted by a virtual mean agent.  In this paper, we extend mean field multiagent algorithms to multiple types. The types enable the relaxation of a core assumption in mean field reinforcement learning, which is that all agents in the environment are playing almost similar strategies and have the same goal. We conduct experiments on three different testbeds for the field of many agent reinforcement learning, based on the standard MAgents framework. We consider two different kinds of mean field environments: a) Games where agents belong to predefined types that are known a priori and b) Games where the type of each agent is unknown and therefore must be learned based on observations.  We introduce new algorithms for each type of game and demonstrate their superior performance over state of the art algorithms that assume that all agents belong to the same type and other baseline algorithms in the MAgent framework. 
\end{abstract}

%

\keywords{Mean Field Methods; Multiagent Systems; Reinforcement Learning; Many-Agent Learning} 

\maketitle


\section{Introduction}

Multiagent reinforcement learning (MARL) is a quickly growing field with lots of recent research pushing its boundaries \cite{hernandez2019survey,nguyen2018deep}. Yet, scaling the multiagent algorithms to environments with a large number of learning agents continues to be a problem \cite{bucsoniu2010multi}. Research advances in the field of MARL \cite{bloembergen2015evolutionary, bu2008comprehensive} deal with only a limited number of agents and the proposed methods cannot be easily generalized to more complex scenarios with many agents. Some recent research has used the concept of mean field theory for enabling the use of MARL approaches to environments with many agents \cite{mguni2018decentralised,pmlr-v80-yang18d}. Yet, the current algorithms implemented for many agents require some strong assumptions about the game environment to perform adequately. The important mean field approximation reduces a many agent problem into a simplified two agent problem where all the other participating agents are approximated as a single mean field. This mean field approximation, however, would be valid only for scenarios where all the agents in the environment can be considered similar to each other in objectives and abilities. Real world applications often have a set of agents that are diverse, and therefore it is virtually  impossible to aggregate them into a single mean field. 

In this paper, we introduce a concept of multiple types to model agent diversity in the mean field approximation for many agent reinforcement learning. The types are groupings applied to other agents in the environment in such a way that all members of a particular type play approximately similar strategies and have similar overall goals. Now, the modelling agent can consider each type to be a distinct agent, which has to be modelled separately. 
Within each type, the mean field approximation should still be reasonable as the agents within one particular type are more related to each other than other agents from different types. Thus, the many agent interaction is effectively reduced to $M$ agent interactions where $M$ is the number of types. Note that this is more complex than the simple two agent interaction considered by previous research and this can approximate a real world dynamic environment in a much better way. Most real world applications for many agent reinforcement learning can be broadly classified into two categories. The first category involves applications where we have predefined types and the type of each agent is known a priori. Some common applications include games with multiple teams (like quiz competitions), multiple party elections with coalitions, airline price analysis with multiple airlines forming an alliance beforehand, etc. We call these applications predefined {\em known} type scenarios. The other category are applications that involve a large number of agents that may have different policies due to differences in their rewards, actions or observations.  Common examples are demand and supply scenarios, stock trading scenarios, etc., where one type of agent may be risk averse while another type may be risk seeking. We call these applications pre-defined {\em unknown} type scenarios, since there are no true underlying types like the previous case and a suitable type therefore must be assigned through observations. Another important aspect in this paper will be the notion of {\em neighbourhood} since each individual agent may be impacted more by agents whose states are ``closer'' (according to some distance measure) to the state of an agent.  For instance, in battle environments, nearby agents pose a greater threat than far away agents.

Using the open source test environment for many agent games --- MAgents \cite{zheng2018magent} --- we consider three testbeds that involve many strategic agents in the environment. Two of these testbeds correspond to the known type and the third one corresponds to the unknown type. We introduce two different algorithms for the known and unknown type scenarios. We demonstrate the superior performance of these algorithms in comparison to previous algorithms that assume a single type of agents in the testbeds.

\section{Background}
\textbf{Reinforcement Learning}: Single agent reinforcement learning (RL) \cite{sutton1998introduction} is the most common form of reinforcement learning in the literature \cite{arulkumaran2017brief}. Here the problem is modelled in the framework of a Markov Decision Process (MDP) where the MDP is composed of $\langle \mathcal{S},A,P,R \rangle$, where $\mathcal{S}$ denotes the set of states that the agent can move into, $A$ denotes the set of actions that the agent can take, $P$ denotes the transition distribution ($P(s'|s,a)$) and
$R$ denotes the reward function ($R(s,a)$). The agents are allowed to explore the environment during the process of training and collect experience tuples $\langle s,a,s',r\rangle$. An agent learns a policy $\pi: \mathcal{S} \rightarrow A$, which is a mapping from states to actions where the goal is to maximize the expected cumulative rewards $\sum_t \gamma^t R(s_t,a_t)$, where $\gamma \in [0,1)$ is the discount factor. An optimal policy obtains the highest cumulative rewards among all possible policies. 

In Multiagent reinforcement learning (MARL), there is a notion of stochastic games \cite{bowling2000analysis}, where the state and action space are defined as Cartesian products of individual states and actions of different agents in the environment. A stochastic game  can be considered a special type of normal form game \cite{jordan1991bayesian}, where a particular iteration of the game depends on previous game(s) played and the experiences of all the agents in the previous game(s). A stochastic game can be defined as a tuple $\langle \mathcal{S},N,{\textbf{A}},P,R \rangle$ where $\mathcal{S}$ is a finite set of states (assumed to be the same for all agents), $N$ is a finite set of $n$ agents, ${\textbf{A}} = A^1 \times \cdots \times A^n $ where $A^j$ denotes the set of actions of the agent $j$. $P$ is the transition distribution $P(s'|s,\textbf{a})$ where $\textbf{a}= (a^1, \ldots, a^n)$ and $R_j(s,\textbf{a}) = r^j$ is the reward function with $r^j$ denoting the reward received by agent $j$. Here each agent is trying to learn a policy that maximizes its return upon consideration of opponent behaviours. Agents are typically self interested and the combined system moves towards a Nash equilibrium \cite{maskin1999nash}.  Scalability in MARL environments is often a bottleneck. Important research efforts are aimed at handling only up to a handful of agents and the solutions or algorithms considered become intractable in many agent scenarios.

\textbf{Mean Field Reinforcement Learning}: Mean field theory approximates many agent interactions in a multiagent environment into two agent interactions \cite{lasry2007mean} where the second agent corresponds to the mean effect of all the other agents. 
This allows domains with many agents that were previously considered intractable to be revisited and scalable approximate solutions to be devised \cite{pmlr-v80-yang18d,mguni2018decentralised}. 
Mean field reinforcement learning applies mean field approximation to stochastic games \cite{pmlr-v80-yang18d}.
In the paper by Yang et al.~\shortcite{pmlr-v80-yang18d}, the multi-agent $Q$ function for stochastic games is decomposed additively into local $Q$ functions that capture pairwise interactions: 
\begin{equation}
Q^{j}(s,\textbf{a}) = \frac{1}{n^j}\sum_{k \in \eta(j)} Q^j (s, a^j, a^k). 
\end{equation}
Here $n^j$ is the number of neighbours of the agent $j$ and $\eta(j)$ 
is the index set of neighbouring agents. Yang et al.~\shortcite{pmlr-v80-yang18d} showed that this decomposition can be well approximated by the mean field $Q$ function $Q^j(s,\textbf{a})\approx Q^j_{MF}(s,a^j, \overline{a}^j)$ under certain conditions.  The mean action $\overline{a}^j$ on the neighbourhood $\eta(j)$ of agent $j$ is expressed as $\overline{a}^j = \frac{1}{n^j} \sum_{k\in \eta(j)} a^k$ where $a^k$ is the action of each neighbour $k$.  In the case of discrete actions, $a^k$ is a one-hot vector encoding and $\overline{a}^j$ is a vector of fractions corresponding to the probability that each action may be executed by an agent at random.

The mean field $Q$ function can be updated in a recurrent manner as follows,

\begin{equation}
\label{eq:MFQ}
\medmath{Q^j_{t + 1}(s,a^j, \overline{a}^j)  =  (1-\alpha) Q_t^j(s,a^j, \overline{a}^j) + \alpha[r^j + \gamma v^j_t(s')]} 
\end{equation}

\noindent where $r^j$ is the reward obtained. The $s,s'$ are the old and new states respectively. $\alpha_t$ is the learning rate. The value function $v^j_t(s')$ for agent $j$ at time $t$ is given by, 

\begin{equation}
\medmath{v^{j}_{t}(s') = \sum_{a^j}\pi^j_t(a^j|s',\overline{a}^j)  \E_{a^{-j}_{i}\sim \pi_{i,t}^{-j}} Q_t^j(s',a^j, \overline{a}^j)}.
\end{equation}

\noindent Here, the term $\overline{a}^{j}$ denotes the mean action of all the other agents apart from $j$. 
The mean action for all the agents is first calculated using the relation, $ \overline{a}^j = \frac{1}{N^j} \sum_{k} a^k, a^k \sim \pi_{t}^k(\cdot|s,\overline{a}^k_{-})
 $,  where $\pi_t^k$ is the policy of the agent $k$ at time $t$, and $\overline{a}^k_{-}$ represents the previous mean action for the neighbours of the agent $k$. $N^j$ denotes the total number of agents in the neighbourhood of the agent $j$.
Then, the Boltzmann policy for each agent $j$ is calculated using $\beta$, which is the Boltzmann softmax parameter,
\begin{equation}
\medmath{\pi_t^j(a^j|s, \overline{a}^j) = \frac{exp(-\beta Q^j_t(s,a^j, \overline{a}^j))}{\sum_{a^{j'}\in A^j}exp(-\beta Q_t^j(s,a^{j'},\overline{a}^{j}))}}.
\end{equation}

\section{Mean Field MARL and Types}

We consider environments where there are $M$ types that the neighbouring agents can be classified into. In this paper, we assume that the $Q$ function decomposes additively according to a partition of the agents into $X^j$ subsets that each include one agent of each type. This decomposition can be viewed as a generalization of pairwise decompositions to multiple types since each term depends on a representative from each type. 

   Let the standard multi-agent $Q$ function be $Q^j(s,\boldsymbol{a})$, then, 

\begin{equation}\label{eq:meanfield}
Q^j(s, \boldsymbol{a}) = \frac{1}{X^j} \sum_{i = 1}^{X^j}[Q^j(s, a^j, a^{k_i}_1, a^{k_i}_2, \ldots, a^{k_i}_M) ].    
\end{equation}

\noindent Here we have a total of $M$ types and $a^{k}_m$ denotes the action of agent $k$ belonging to type $m$ in the neighbourhood of agent $j$. Notice that this representation of the $Q$ function includes the interaction with each one of the types and is not a simple pairwise interaction as done by \cite{pmlr-v80-yang18d}. Let us assume that we have a scheme in which we can classify each agent into one of these subsets. Note that we do not need each group to contain an equal number of agents as we can always make a new subset that contains one agent of a type and other agents to be placeholder agents (dead agents) of other types. We will relax this requirement of decomposition shortly and in practice we do not need to make these subsets at all.  

\subsection{Mean Field Approximation}

We also assume discrete action spaces and use a one hot representation of the actions as in Yang et al.~\cite{pmlr-v80-yang18d}. The one hot action of each agent $k$ belonging to type $m$ in the neighbourhood of agent $j$ is represented as $ a^{k_m}_m = \overline{a}^j_m + \hat{\delta} ^{j,k_m}$ where $\overline{a}^{j}_{m} $ is the mean action of all agents in the neighbourhood of agent $j$ belonging to type $m$ and $\hat{\delta}^{j,k_m}$ is the deviation between the action of an agent and the mean action of its type. 

Let $\delta^{j,k_i} = [\hat{\delta}^{j,k_1}; \hat{\delta}^{j,k_2}; \cdots; \hat{\delta}^{j,k_M}]$ be a vector obtained by the concatenation of all such deviations of the agents in the neighbourhood of the agent $j$ belonging to each of the $M$ types (all agents of a single subset). Similar to \cite{pmlr-v80-yang18d}, we apply Taylor's theorem to expand the $Q$ function in Equation~\ref{eq:meanfield} to get,

\begin{equation}
   \begin{array}{l} 

Q^j(s,\textbf{a}) = \frac{1}{X^j} \sum_{i=1}^{X^j} Q^j(s, a^j, a^{k_i}_1, a^{k_i}_2, \ldots, a^{k_i}_M) \\
 = \frac{1}{X^j}\sum_{i=1}^{X^j} [Q^j(s, a^j, \overline{a}^j_1, \ldots , \overline{a}^j_M) \\ 
 \quad +  \nabla_{\overline{a}^j_1, \ldots, \overline{a}^j_M}Q^j(s, a^j, \overline{a}^j_1, \ldots, \overline{a}^j_M) \cdot \delta^{j,k_i} \nonumber \\ 
 \quad  + \frac{1}{2} \delta^{j,k_i} \cdot \nabla^2_{\Tilde{a}^j_1, \ldots, \Tilde{a}^j_M} Q^j(s, a^j, \Tilde{a}^j_1,\ldots , \Tilde{a}^j_M) \cdot \delta ^{j,k_i}] \nonumber \\
 = Q^j(s, a^j, \overline{a}^j_1, \ldots , \overline{a}^j_M) \\ 
 \quad + \nabla_{\overline{a}^j_1, \ldots, \overline{a}^j_M}Q^j(s,a^j, \overline{a}^j_1, \overline{a}^j_2, \ldots, \overline{a}^j_M) \cdot [\frac{1}{X^j}\sum_{i=1}^{X^j} \delta^{j,k_i}] \nonumber \\ 
 \quad + \frac{1}{2X^j} \sum_{i=1}^{X^j}[\delta ^{j,k_i} \cdot \nabla^2_{\Tilde{a}^j_1, \ldots, \Tilde{a}^j_M} Q^j(s,a^j, \Tilde{a}^j_1, \Tilde{a}^j_2, \ldots, \Tilde{a}^j_M) \cdot \delta^{j,k_i} ] \nonumber \\
 \medmath{= Q^j(s, a^j, \overline{a}^j_1, \ldots , \overline{a}^j_M) +  \frac{1}{2X^j} \sum_{i=1}^{X^j}[R^j_{s,a^j}(a^{k_i})] 
 \approx Q^j(s, a^j, \overline{a}^j_1, \ldots , \overline{a}^j_M)} \nonumber
\end{array}
\end{equation}

\noindent where $R^j_{s,a^j}(a^k) \triangleq \delta^{j,k} \cdot \nabla^2_{\Tilde{a}^j_1, \ldots, \Tilde{a}^j_M} Q^j(s,a^j, \Tilde{a}^j_1, \Tilde{a}^j_2, \ldots, \Tilde{a}^j_M) \cdot \delta^{j,k}  $

\noindent
which is the Taylor polynomial remainder. The summation term $ [\frac{1}{X^j}\sum_{i=1}^{X^j} \delta^{j,k_i}] $ sums out to 0. Finally, if we ignore the remainder terms $R^j_{s,a^j}$, we obtain the following approximation,

\begin{equation}\label{eq:MTMFapprox}
Q^{j}(s,\textbf{a}) \approx Q^j_{\textit{MTMF}} (s, a^j, \overline{a}^j_1, \ldots , \overline{a}^j_M).
\end{equation}

The magnitude of this approximation depends on the deviation $\hat{\delta}^{j,k_m}$ between each action $a_m^{k_m}$ and its mean field approximation $\bar{a}_m^j$.  More precisely, we can quantify the overall effect of the mean field approximation by the average deviation $\sum_k ||\hat{\delta}_k||_2 / N$.  Theorems \ref{theorem:bound1} and \ref{theorem:bound2} show that the average deviation is reduced as we increase the number of types and Theorem \ref{theorem:mtmfq} provides an explicit bound on the approximation in Eq. \ref{eq:MTMFapprox} based on the average deviation.

\begin{theorem}\label{theorem:bound1}
When there are two types in the environment, but they have been considered to be the same type, the average deviation induced by the mean field approximation is bounded as follows,

\begin{equation}
    \frac{\sum_k || \hat{\delta}_k||_2}{N} \leq \frac{K_1}{N}\epsilon_1 + \frac{K_2}{N}\epsilon_2 + \frac{K_1}{N} \alpha_1 + \frac{K_2}{N} \alpha_2
\end{equation}

\noindent where $N$ denotes the total number of agents in the environment; $K_1$ and $K_2$ denote the total number of agents of types $1$ and $2$, respectively; $\epsilon_1$ and $\epsilon_2$ are bounds on the average deviation for agents of types 1 and 2 respectively. 
 
\begin{equation*}
\begin{array}{l}

 \frac{1}{K_1} (\sum_{k_1} ||a_{k_1} - \overline{a}_1 ||_2) \leq \epsilon_1; 
 \frac{1}{K_2} (\sum_{k_2} ||a_{k_2} - \overline{a}_2 ||_2) \leq \epsilon_2 
\end{array} 
\end{equation*}

Similarly, $\alpha_1$ and $\alpha_2$ denote the errors induced by using a single type (instead of two types) in the mean field approximation.

\begin{equation}
\begin{array}{l}
 \alpha_1 = ||\overline{a}_1 - \overline{a} ||_2; 
 \alpha_2 = ||\overline{a}_2 - \overline{a} ||_2
 \end{array} 
\end{equation}

Furthermore, $a_{k_1}$ denotes an action of an agent belonging to type $1$ and $a_{k_2}$ denotes an action of an agent belonging to type $2$. Here $\overline{a}$ denotes the mean field action of all the agents, $\overline{a}_1$ denotes the mean action of all agents of type $1$ and $\overline{a}_2$ denotes the mean action of all agents of type $2$.

\end{theorem}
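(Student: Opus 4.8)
The plan is to prove this purely by the triangle inequality after decomposing the sum over all agents according to their true type. The key observation is that when the two types are collapsed into a single type, the deviation assigned to agent $k$ is $\hat{\delta}_k = a_k - \overline{a}$, where $\overline{a}$ is the overall mean action. First I would split the total sum of deviation norms by type, writing $\sum_k \|\hat{\delta}_k\|_2 = \sum_{k_1}\|a_{k_1}-\overline{a}\|_2 + \sum_{k_2}\|a_{k_2}-\overline{a}\|_2$, where the two sums range over the $K_1$ agents of type $1$ and the $K_2$ agents of type $2$, respectively, and $N = K_1 + K_2$.

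The central step is to insert the within-type mean into each summand. For a type-$1$ agent I would write $a_{k_1}-\overline{a} = (a_{k_1}-\overline{a}_1)+(\overline{a}_1-\overline{a})$ and apply the triangle inequality to obtain $\|a_{k_1}-\overline{a}\| \le \|a_{k_1}-\overline{a}_1\| + \|\overline{a}_1-\overline{a}\|$. Summing over all type-$1$ agents and invoking the hypotheses $\sum_{k_1}\|a_{k_1}-\overline{a}_1\| \le K_1\epsilon_1$ together with $\|\overline{a}_1-\overline{a}\| = \alpha_1$ gives $\sum_{k_1}\|a_{k_1}-\overline{a}\| \le K_1\epsilon_1 + K_1\alpha_1$. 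The identical argument applied to type-$2$ agents yields $\sum_{k_2}\|a_{k_2}-\overline{a}\| \le K_2\epsilon_2 + K_2\alpha_2$.

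Adding the two per-type bounds and dividing through by $N$ produces exactly the claimed inequality. The purpose of inserting $\overline{a}_1$ (respectively $\overline{a}_2$) is that it cleanly separates each deviation into a within-type part, controlled by the per-type average-deviation bounds $\epsilon_1$ and $\epsilon_2$, and a between-type part, namely the mean shift $\alpha_1$ and $\alpha_2$ that is incurred precisely because a single mean field is used in place of two. I do not expect any genuine obstacle here: the result is a direct consequence of the triangle inequality. The only point requiring care is the bookkeeping, specifically correctly identifying $\hat{\delta}_k$ as $a_k - \overline{a}$ under the single-type approximation and tracking that the per-type sums contribute the factors $K_1$ and $K_2$, which become the weights $K_1/N$ and $K_2/N$ after normalization.
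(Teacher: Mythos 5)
Your proposal is correct and follows essentially the same route as the paper's proof: split the sum of deviations by true type, insert the within-type mean $\overline{a}_1$ (resp.\ $\overline{a}_2$) into each term, apply the triangle inequality, and bound the resulting within-type and between-type sums by $K_1\epsilon_1 + K_1\alpha_1$ and $K_2\epsilon_2 + K_2\alpha_2$ before dividing by $N$. There is no substantive difference in decomposition or technique.
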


\begin{proof}

Since we have considered every agent to belong to only one type, the deviation between the agent's action and the overall mean action is the error estimate. Hence, we will have the following, 

\begin{equation*}
    \begin{array}{l}
      \frac{\sum_k || \hat{\delta}_k||_2}{N} = \frac{\sum_k||a^j - \overline{a}^j ||_2}{N}
    \end{array}
\end{equation*}

\begin{equation*}
\begin{array}{l}
     \frac{\sum_k || \hat{\delta}_k||_2}{N} = \frac{1}{N} (\sum_{
k_1} ||a_{k_1} - \overline{a}|| + \sum_{k_2} ||a_{k_2} - \overline{a} ||).
     
\end{array}
\end{equation*}

The superscript ($j$) and subscript (2) have been dropped for simplicity.

\begin{equation}\label{eq:singletypebound}
    \begin{array}{l}
          \medmath{= \frac{1}{N} (\sum_{
k_1} ||a_{k_1} - \overline{a_1} + \overline{a}_1 - \overline{a}|| + \sum_{k_2} ||a_{k_2} - \overline{a_2} + \overline{a_2} - \overline{a} ||)}

\\

\medmath{\leq \frac{1}{N} (\sum_{
k_1} ||a_{k_1} - \overline{a_1}|| + \sum_{k_1} ||\overline{a}_1 - \overline{a}|| + \sum_{k_2} ||a_{k_2} - \overline{a_2}|| + \sum_{k_2} ||\overline{a_2} - \overline{a} ||)}

\\

\medmath{=  \frac{1}{N} (\sum_{
k_1} ||a_{k_1} - \overline{a_1}|| + K_1 ||\overline{a}_1 - \overline{a} || + \sum_{k_2} ||a_{k_2} - \overline{a_2}|| + K_2 ||\overline{a_2} - \overline{a} ||)}
\\

\medmath{\leq \frac{K_1}{N}\epsilon_1 + \frac{K_2}{N}\epsilon_2 + \frac{K_1}{N} \alpha_1 + \frac{K_2}{N} \alpha_2}.

    \end{array}
\end{equation}

\end{proof}

\begin{theorem}\label{theorem:bound2}

When there are two types in the environment, and they have been considered to be different types, the average deviation induced by the mean field approximation is bounded as follows,

\begin{equation}
    \frac{\sum_k || \hat{\delta}_k||_2}{N} \leq \frac{K_1}{N} \epsilon_1 + \frac{K_2}{N} \epsilon_2.
\end{equation}

The variables have the same meaning as in Theorem \ref{theorem:bound1}.

\end{theorem}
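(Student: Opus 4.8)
The plan is to follow the skeleton of the proof of Theorem~\ref{theorem:bound1}, but to exploit the single structural difference that makes this case easier: because the two types are now correctly distinguished, the mean field each agent is compared against is the mean action of its \emph{own} type, not the global mean $\overline{a}$. Concretely, for an agent $k_1$ of type $1$ the relevant deviation is $\hat{\delta}_{k_1} = a_{k_1} - \overline{a}_1$, and for an agent $k_2$ of type $2$ it is $\hat{\delta}_{k_2} = a_{k_2} - \overline{a}_2$. This is the entire content of the hypothesis ``they have been considered to be different types.''

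First I would write out the total average deviation and partition the sum over all agents into the two type-indexed sums, using that every agent belongs to exactly one of the two types:
\begin{equation*}
\frac{\sum_k ||\hat{\delta}_k||_2}{N} = \frac{1}{N}\Big(\sum_{k_1} ||a_{k_1} - \overline{a}_1|| + \sum_{k_2} ||a_{k_2} - \overline{a}_2||\Big).
\end{equation*}
Next I would invoke the per-type deviation bounds already introduced in Theorem~\ref{theorem:bound1}, namely $\sum_{k_1} ||a_{k_1} - \overline{a}_1|| \leq K_1 \epsilon_1$ and $\sum_{k_2} ||a_{k_2} - \overline{a}_2|| \leq K_2 \epsilon_2$. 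Substituting these in immediately yields $\tfrac{K_1}{N}\epsilon_1 + \tfrac{K_2}{N}\epsilon_2$, which is the claimed bound.

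The instructive part, rather than any obstacle, is the contrast with Theorem~\ref{theorem:bound1}. There, because all agents were forced into a single type, each deviation had to be measured against the global mean, forcing the triangle-inequality split $||a_{k_m} - \overline{a}|| \leq ||a_{k_m} - \overline{a}_m|| + ||\overline{a}_m - \overline{a}||$ that generated the extra mismatch terms $\alpha_1,\alpha_2$. Correct type assignment removes the need for that decomposition entirely, so those terms are simply absent here. I do not expect a genuine difficulty in this proof; the only thing to verify carefully is that the hypothesis legitimately lets one identify $\hat{\delta}_{k_m}$ with $a_{k_m}-\overline{a}_m$ for each type. Comparing the two results then makes the intended message explicit: the two-type bound is tighter than the one-type bound by exactly $\tfrac{K_1}{N}\alpha_1 + \tfrac{K_2}{N}\alpha_2 \geq 0$, quantifying the benefit of modelling the types separately.
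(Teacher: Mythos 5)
Your proof is correct and follows essentially the same route as the paper's: partition the sum of deviations by type (each measured against its own type's mean action, which is exactly what the hypothesis provides), then apply the per-type bounds $\sum_{k_1}\|a_{k_1}-\overline{a}_1\| \leq K_1\epsilon_1$ and $\sum_{k_2}\|a_{k_2}-\overline{a}_2\| \leq K_2\epsilon_2$. Your closing remark contrasting this with the $\alpha_1,\alpha_2$ terms of Theorem~\ref{theorem:bound1} matches the paper's stated motivation for presenting the two theorems together.
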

\begin{proof}

In this scenario we will have,

\begin{equation}\label{eq:multitypebound}
    \begin{array}{l}
  \medmath{       \frac{\sum_k || \hat{\delta}_k||_2}{N}  =  \frac{1}{N} (\sum_{k_1} ||a_{k_1} - \overline{a}_1 ||_2 + \sum_{k_2}||a_{k_2} - \overline{a}_2 ||_2)}
 \\ \\
 
 \medmath{ = \frac{K_1}{N} \frac{\sum_{k_1} ||a_{k_1} - \overline{a}_1 ||_2}{K_1} + \frac{K_2}{N}\frac{\sum_{k_2}||a_{k_2} - \overline{a}_2 ||_2}{K_2} \leq \frac{K_1}{N} \epsilon_1 + \frac{K_2}{N} \epsilon_2}.

    \end{array}{}
\end{equation}{}
\end{proof}

We presented Theorems \ref{theorem:bound1} and \ref{theorem:bound2} to demonstrate the reduction in the bound on the average deviation as we increase the number of types from 1 to 2.  Similar derivations can be performed to demonstrate that the bounds on the average deviation decrease as we increase the number of types (regardless of the true number of types).

Let $\epsilon$ be a bound on the average deviation achieved based on a certain number of types: $\frac{\sum_{k=1}^{X} ||\delta a||_2}{X} \leq \epsilon$.  The following theorem bounds the error of the approximate mean field $Q$ function as a function of $\epsilon$ and the smoothness $L$ of the exact $Q$ function.

\begin{theorem}\label{theorem:mtmfq}
When the $Q$ function is additively decomposable according to Equation~\ref{eq:meanfield}, and it is $L$-smooth, then the Multi Type Mean Field $Q$ function provides a good approximation bounded by 
\begin{equation}
    |Q^j(s,{\bf a}) - Q_{\textit{MTMF}}^j(s,a^j,\bar{a}^j_1,\ldots,\bar{a}^j_M)| \le \frac{1}{2}L\epsilon.
\end{equation}
\end{theorem}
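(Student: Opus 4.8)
The plan is to read the approximation error directly off the second-order Taylor expansion derived just above Equation~\ref{eq:MTMFapprox}, where the difference between the exact and multi-type mean-field $Q$-functions was already isolated as an averaged remainder. Concretely, that expansion gives
\begin{equation*}
Q^j(s,\textbf{a}) - Q^j_{\textit{MTMF}}(s,a^j,\overline{a}^j_1,\ldots,\overline{a}^j_M) = \frac{1}{2X^j}\sum_{i=1}^{X^j} R^j_{s,a^j}(a^{k_i}),
\end{equation*}
since the zeroth-order term is exactly $Q^j_{\textit{MTMF}}$ by the definition in Equation~\ref{eq:MTMFapprox}, and the first-order term vanishes because $\frac{1}{X^j}\sum_{i=1}^{X^j}\delta^{j,k_i}=0$, as already noted. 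So the entire task reduces to bounding the averaged quadratic remainder.

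Next I would invoke $L$-smoothness, which bounds the curvature of $Q^j$: the Hessian appearing in each remainder has operator norm at most $L$, so the quadratic form obeys $|R^j_{s,a^j}(a^{k_i})| = |\delta^{j,k_i}\cdot\nabla^2 Q^j\cdot\delta^{j,k_i}| \le L\|\delta^{j,k_i}\|_2^2$. Substituting and applying the triangle inequality termwise yields
\begin{equation*}
|Q^j(s,\textbf{a}) - Q^j_{\textit{MTMF}}| \le \frac{L}{2}\cdot\frac{1}{X^j}\sum_{i=1}^{X^j}\|\delta^{j,k_i}\|_2^2.
\end{equation*}
Equivalently, one can apply the descent-lemma form of $L$-smoothness to each summand of Equation~\ref{eq:meanfield} and then average, which absorbs the gradient step in a single line and reproduces the same inequality.

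The remaining, and genuinely delicate, step is to connect the average of the \emph{squared} deviations appearing here with the hypothesis, which bounds the average of the \emph{unsquared} deviations, $\frac{1}{X}\sum_{k}\|\delta a\|_2 \le \epsilon$. The bridge I would use is the boundedness of action deviations: each $a^{k_i}_m$ is a one-hot vector and each $\overline{a}^j_m$ is a convex combination of such vectors, so each per-type deviation, and hence each concatenated $\delta^{j,k_i}$, has bounded $\ell_2$-norm (normalising so that the bound is $1$). Then $\|\delta^{j,k_i}\|_2^2 \le \|\delta^{j,k_i}\|_2$, and averaging gives $\frac{1}{X^j}\sum_i\|\delta^{j,k_i}\|_2^2 \le \frac{1}{X^j}\sum_i\|\delta^{j,k_i}\|_2 \le \epsilon$, which plugged into the previous display yields $|Q^j(s,\textbf{a}) - Q^j_{\textit{MTMF}}| \le \frac{1}{2}L\epsilon$, as claimed.

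I expect this squared-versus-unsquared norm discrepancy to be the main obstacle, because the quadratic Taylor remainder naturally produces $\|\delta\|_2^2$ whereas $\epsilon$ is stated in terms of $\|\delta\|_2$; the argument closes cleanly only if the deviation norms are suitably bounded, so I would either make that normalisation explicit or, more conservatively, restate $\epsilon$ as a bound on the average squared deviation. A secondary point worth stating carefully is that $L$-smoothness must hold for the map from the full type-action tuple $(a^j,a^{k_i}_1,\ldots,a^{k_i}_M)$ feeding each summand of Equation~\ref{eq:meanfield}, uniformly over $s$ and over the sampled neighbour tuples, so that a single constant $L$ legitimately bounds every Hessian in the sum.
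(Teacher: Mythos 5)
Your proposal follows essentially the same route as the paper's own proof: isolate the averaged second-order Taylor remainder (the first-order term vanishing because the deviations sum to zero), bound the quadratic form via the eigenvalues of the Hessian, which lie in $[-L,L]$ by $L$-smoothness, and then average against the hypothesis on $\epsilon$. Your explicit treatment of the squared-versus-unsquared deviation is in fact more careful than the paper's, which writes $\sum_i \lambda_i [U\delta a]_i^2 \le L\|U\delta a\|_2$ where the immediate bound is $L\|U\delta a\|_2^2$ and silently relies on the deviation norms being at most $1$ --- precisely the normalisation you flag as needing to be made explicit (or the alternative of restating $\epsilon$ as a bound on the average squared deviation).
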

\begin{proof}
We rewrite the expression for the $Q$ function as $Q(a) \triangleq Q^j(s,a^j,\overline{a}^j_1, \overline{a}^j_2, \cdots, \overline{a}^j_M)$. Suppose that $Q$ is $L$-smooth, where its gradient $\nabla Q$ is Lipschitz-continous with constant $L$ such that for all $a, \overline{a}$,

\begin{equation}
\begin{array}{l}
|| \nabla Q(a) - \nabla Q(\overline{a}) ||_2 \leq L ||a - \overline{a} ||_2
\end{array}
\end{equation}

\noindent where $||\cdot||_2$ denotes the $l_2$-norm.  Note that all the eigenvalues of $\nabla^2 Q$ can be bounded in the symmetric interval $[- L, L]$.
As the Hessian matrix $\nabla^2 Q$ is real symmetric and hence diagonalizable, there exists an orthogonal matrix $U$ such that $U^T[\nabla^2 Q]U = \Lambda  \triangleq diag[\lambda_1 ,\ldots, \lambda_D]$. It then follows that,

\begin{equation}\label{eq:bound}
\begin{array}{l}
 \delta a \cdot \nabla^2 Q \cdot \delta a = [U \delta a]^T \Lambda [U\delta a] = \sum_{i=1}^{D}\lambda_i[U \delta a]^2_i

 \end{array}
\end{equation}

with 

\begin{equation}
\begin{array}{l}

 -L||U\delta a ||_2 \leq \sum_{i = 1}^{D} \lambda_i[U \delta a]^2_i \leq L||U\delta a ||_2.

 \end{array}
\end{equation}

Let, $ \hat{\delta}_m a = a^j_m - \overline{a}^j_m $, consistent with the previous definition. Recall that the term $\delta$ is then $ \delta a = [\hat{\delta_1} a ; \hat{\delta_2} a ; \cdots ; \hat{\delta}_M a] $, where $a$ is the one-hot encoding for $D$ actions, and $\overline{a}$ is a $D$-dimensional categorical distribution. Then, it can be shown that,

\begin{equation}\label{eq:trunc}
\begin{array}{l}
||U(\delta a) ||_2 = || \delta a||_2. 
 \end{array}
\end{equation}

Consider the term $
\frac{1}{2 X} \sum_{k=1}^{X} R^j_{s,a^j}(a^k)
$. Since $L$ is the maximum eigenvalue, from Equation \ref{eq:bound} (with a slight abuse of notation),

\begin{equation}
R = \delta a \cdot \nabla^2 Q \cdot \delta a = [U\delta a]^T \Lambda [U\delta a]  = \sum_i \lambda_i[U \delta a]_i^2 \leq L||U \delta a||_2.
\end{equation}

Therefore, from Equation \ref{eq:trunc}:
\begin{equation}
 R  \leq  L||U \delta a||_2 = L||\delta a||_2.
 \end{equation}

Thus,

\begin{equation}
\medmath{\frac{1}{2 X} \sum_k R   \leq \frac{L}{2}\sum_k \frac{||\delta a||_2}{X}   
     \leq \frac{L\epsilon}{2}}. 
\end{equation}

\end{proof}
Thus, in this paper, we modify the mean field $Q$ function to include a finite number of types that each have a corresponding mean field. The $Q$ function then considers a finite number of interactions across types.

\subsection{Mean Field Update}

The mean action $\overline{a}^j_i$ represents the mean action of the neighbours of agent $j$ belonging to type $i$. 
As in the paper by Yang et al.~\shortcite{pmlr-v80-yang18d}, the mean field $Q$ function can be updated in a recurrent manner, 

\begin{equation}
\label{eq:MTMFQ}
\begin{array}{l}
\medmath{Q^j_{t + 1}(s,a^j, \overline{a}^j_1,\ldots,\overline{a}^j_M)  = 
(1-\alpha) Q_t^j(s,a^j, \overline{a}^j_1, \ldots, \overline{a}^j_M) + \alpha[r^j + \gamma v^j_t(s')] } 
\end{array}
\end{equation}

\noindent 
where $r^j$ is the reward obtained. The $s$ and $s'$ are the old and new states respectively. $\alpha_t$ is the learning rate. The value function $v^j_t(s')$ for agent $j$ at time $t$ is given by,

\begin{equation}\label{eq:valuefunc}
\begin{array}{l}
\medmath{v^{j}_{t}(s') = \sum_{a^j}\pi^j_t(a^j|s',\overline{a}^j_1,\ldots,\overline{a}^j_M)  \E_{a^{-j}\sim \pi_{t}^{-j}} [Q_t^j(s',a^j, \overline{a}^j_1, \ldots,\overline{a}^j_M)]}
\end{array}    
\end{equation}

\noindent Here, the term $\overline{a}_i^{j}$ denotes the mean action of all the other agents apart from $j$ belonging to type $i$.  In all of our implementations, the mean action for all the types is first calculated using the relation 

\begin{equation}\label{eq:newmean}
   \medmath{ \overline{a}^j_i = \frac{1}{N^j_i} \sum_{k} a^k_i, a^k_i \sim \pi_{t}^k(\cdot|s,\overline{a}^k_{1-},\ldots,\overline{a}^k_{M-})}
\end{equation}

\noindent
where $\pi_t^k$ is the policy of agent $k$ (in $j$'s neighbourhood) and $\overline{a}^k_{i-}$ represents the previous mean action for the neighbours of agent $k$ belonging to type $i$. $N^j_i$ is the total number of agents of type $i$ in $j$'s neighbourhood. Then, the Boltzmann policy for each agent $j$ is
\begin{equation}\label{eq:boltz}
\medmath{\pi_t^j(a^j|s, \overline{a}^j_1,\ldots,\overline{a}^j_M) = \frac{exp(\beta Q^j_t(s,a^j, \overline{a}^j_1,\ldots,\overline{a}^j_M))}{\sum_{a^{j'}\in A^j}exp(\beta Q_t^j(s,a^{j'},\overline{a}^{j}_1,\ldots,\overline{a}^j_M))} }
\end{equation}

\noindent 
where $\beta$ is the Boltzmann softmax parameter. 

By iterating through Equations \ref{eq:valuefunc}, \ref{eq:newmean} and \ref{eq:boltz}, the mean actions and respective policies of all agents keep improving. We prove in Theorem~\ref{theorem:nasheq} that this approach converges to a fixed point within a small bounded distance of the Nash equilibrium.  In Appendix A,
we give a specific example for a case in which the Multi Type Mean Field algorithm does better than the simple mean field method.

We first make three mild assumptions about the Multi Type Mean Field update and then state two lemmas before giving Theorem~\ref{theorem:nasheq}. 

\textbf{Assumption 1}: In the Multi Type Mean Field update, each action-value pair is visited infinitely often, and the reward is bounded by some constant.

\textbf{Assumption 2}: The agent's policies are Greedy in the Limit with Infinite Exploration (GLIE). In the case of the Boltzmann policy, the policy becomes greedy w.r.t. the $Q$ function in the limit as the temperature decays asymptotically to zero.

\textbf{Assumption 3}: For each stage game $[Q^1_t(s),\ldots,Q^N_t(s)]$ at time~$t$ and in state $s$ in training, for all $t,s,j \in \{1,\ldots,N\}$ the Nash equilibrium $\pi_{*} = [\pi_{*}^1,\ldots,\pi_{*}^N]$ is recognized either as a global optimum or a saddle point as expressed as:
\begin{equation}\label{eq:gop}
\begin{array}{l}

\medmath{1. \E_{\pi_*}[Q_t^j(s)] \geq \E_{\pi}[Q_t^j(s)], \forall \pi \in \Omega(\Pi_k \mathscr{A}^k)}.

\end{array}
\end{equation}

\begin{equation}\label{eq:sp1}
\begin{array}{l}

\medmath{2. \E_{\pi_*}[Q_t^j(s)] \geq \E_{\pi^j}\E_{\pi_*^{-j}}[Q_t^j(s)], \forall \pi^j \in \Omega(\mathscr{A}^j)}, \textrm{and}

\end{array}
\end{equation}

\begin{equation}\label{eq:sp2}
\begin{array}{l}
\medmath{\E_{\pi_*}[Q_t^j(s)] \leq \E_{\pi^j_*}\E_{\pi^{-j}}[Q_t^j(s)], \forall \pi^{-j} \in \Omega(\Pi_{k \neq j} {\mathscr{A}^k})}.
\end{array}
\end{equation}

\begin{lemm}\label{lemma:nashoperator}

Under Assumption 3, the Nash operator $\mathscr{H}^{Nash}$ forms a contraction mapping under the complete metric space from $\mathcal{Q}$ to $\mathcal{Q}$ with the fixed point being the Nash $Q$ value of the entire game ($\boldsymbol{Q}_*$), i.e., $ \mathscr{H}^{Nash} \boldsymbol{Q}_* = \boldsymbol{Q}_*  $.
\end{lemm}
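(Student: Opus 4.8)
The plan is to apply the Banach fixed-point theorem on the space $\mathcal{Q}$ of bounded $Q$-functions equipped with the sup-norm $\|Q\|_\infty = \max_{s,\mathbf{a}}|Q(s,\mathbf{a})|$, which is a complete metric space. Recall that the Nash operator acts as
\[
(\mathscr{H}^{Nash}Q)(s,\mathbf{a}) = \E_{s'}\big[r(s,\mathbf{a}) + \gamma\, v^{Nash}_Q(s')\big],
\]
where $v^{Nash}_Q(s') = \E_{\pi_*(s')}[Q(s')]$ is the expected payoff of the stage game $Q(s')$ evaluated at its Nash equilibrium $\pi_*(s')$. The statement then reduces to establishing the contraction estimate $\|\mathscr{H}^{Nash}Q - \mathscr{H}^{Nash}Q'\|_\infty \le \gamma\,\|Q-Q'\|_\infty$; once this holds, Banach's theorem supplies a unique fixed point, and since the Nash $Q$-value is by construction fixed under $\mathscr{H}^{Nash}$, it must coincide with that fixed point, giving $\mathscr{H}^{Nash}Q_* = Q_*$.

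First I would peel off the parts that contribute no expansion: the reward $r(s,\mathbf{a})$ cancels in the difference, and the transition kernel is a convex average, so it suffices to bound $|v^{Nash}_Q(s') - v^{Nash}_{Q'}(s')|$ by $\|Q-Q'\|_\infty$ for every $s'$, with the discount $\gamma$ then supplying the contraction modulus. This reduces everything to a \emph{non-expansion} property of the Nash value map on stage games.

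The main obstacle, and the only place Assumption 3 is used, is proving this non-expansion. Following the argument of Hu and Wellman and of Yang et al., I would split on whether $\pi_*$ is a global optimum or a saddle point. In the global-optimum case, Eq.~\ref{eq:gop} lets me bound $v^{Nash}_Q - v^{Nash}_{Q'}$ from above by evaluating the game $Q$ at the equilibrium of $Q'$ and from below by the symmetric substitution; each resulting expectation of a table difference is at most $\|Q-Q'\|_\infty$ because the equilibrium policies are probability distributions that merely average entries of $Q-Q'$. In the saddle-point case I would instead invoke the paired inequalities Eq.~\ref{eq:sp1} and Eq.~\ref{eq:sp2}, cross-evaluating the equilibrium of one game against the opponent's equilibrium strategy from the other; the same convex averaging collapses the gap. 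Either branch yields $|v^{Nash}_Q(s') - v^{Nash}_{Q'}(s')| \le \max_{\mathbf{a}}|Q(s',\mathbf{a}) - Q'(s',\mathbf{a})| \le \|Q-Q'\|_\infty$. The delicacy here is handling the two-sided bound cleanly, since the equilibria of $Q$ and $Q'$ differ and one must choose feasible comparison strategies so that each Nash inequality applies in the correct direction.

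Finally I would assemble the estimate: taking the supremum over $(s,\mathbf{a})$ of $|(\mathscr{H}^{Nash}Q - \mathscr{H}^{Nash}Q')(s,\mathbf{a})| \le \gamma\,\E_{s'}\big[\,|v^{Nash}_Q(s') - v^{Nash}_{Q'}(s')|\,\big]$ and inserting the non-expansion bound gives the contraction with modulus $\gamma < 1$. Completeness of $(\mathcal{Q},\|\cdot\|_\infty)$ together with the contraction then yields, via Banach's theorem, a unique fixed point reached by iterating $\mathscr{H}^{Nash}$; identifying it with the Nash $Q$-value $Q_*$ closes the argument.
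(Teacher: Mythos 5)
Your argument is correct and is essentially the standard proof the paper relies on: the paper itself does not reproduce it but simply defers to Theorem 17 of Hu and Wellman (2003), whose argument is exactly your route --- completeness of $\mathcal{Q}$ under the sup-norm, cancellation of the reward term, reduction to non-expansion of the Nash value map on stage games via the global-optimum/saddle-point dichotomy of Assumption 3, and Banach's fixed-point theorem. You have correctly located the only delicate step (the two-sided cross-evaluation bound, which is precisely why Assumption 3 is needed), so nothing is missing relative to the cited proof.
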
{}
\begin{proof}
Refer to Theorem 17 in \cite{hu2003nash} for a detailed proof. 
\end{proof}{}

We define a new operator $\mathscr{H}^{MTMF}$ which is the Multi Type Mean Field operator. We differentiate this from the Nash operator used above. This operator is defined as (where $\boldsymbol{Q} \triangleq [Q^1, \ldots, Q^N]$),

\begin{equation}
    \begin{array}{l}
        \mathscr{H}^{MTMF} \boldsymbol{Q}(s,\textbf{a}) \triangleq \E_{s' \sim p}[ \textbf{r}(s,\textbf{a}) + \gamma \textbf{v}^{MTMF} (s')].
    \end{array}
\end{equation}

The Multi Type Mean Field value function can be defined as $\textbf{v}^{MTMF}(s) \triangleq [v^1(s), \ldots, v^N(s)]$. This is the value function obtained from Equation \ref{eq:valuefunc}. Also, $\boldsymbol{r}(s,\boldsymbol{a})  = [r^1(s,\boldsymbol{a}),\ldots,r^N(s,\boldsymbol{a})]$. Now using the same principle of Lemma \ref{lemma:nashoperator} on the Multi Type Mean Field operator, we can show that the Multi Type Mean Field operator also forms a contraction mapping (additionally refer to Proposition~1 in Yang et al.~\cite{pmlr-v80-yang18d}).

\begin{lemm}\label{lemma:jakkollalemma}

The random process ${\Delta_t}$ defined in $\mathcal{R}$ as

\begin{equation}\label{eq:lemma2}
\begin{array}{l}
\Delta_{t+1}(x) = (1 - \alpha)\Delta_t(x) + \alpha F_t(x)  

\end{array}
\end{equation}
converges to a constant $S$ with probability 1 (w.p.t 1) when 
\begin{equation}
\begin{array}{l}
1) \hspace{10mm} 0 \leq \alpha \leq 1, \sum_t \alpha = \infty, \sum_t \alpha^2  < \infty.
\end{array}
\end{equation}

\begin{equation}
\begin{array}{l}
2) \hspace{30mm} x \in \mathscr{X}; |\mathscr{X}| <  \infty
\end{array}
\end{equation}

\noindent where $\mathscr{X}$ is the set of possible states,

\begin{equation}
\begin{array}{l}
3) \hspace{10mm} ||\E[F_t(x)|\mathscr{F}_t]||_w \leq \gamma || \Delta_t||_w + K   
\end{array}
\end{equation}

\noindent where $\gamma \in [0,1)$ and K is finite. 

\begin{equation}
\begin{array}{l}
4) \hspace{10mm}  \textbf{var}[F_t(x)|\mathscr{F}_t] \leq K_2(1 + || \Delta_t||^2_w)   
\end{array}
\end{equation}
with constant $K_2 > 0$ and finite. 

Here $\mathscr{F}_t$ denotes the filtration of an increasing sequence of $\sigma$-fields including the history of processes; $\alpha_t$, $\Delta_t$, $F_t \in \mathscr{F}_t$ and $|| \cdot ||_w $ is a weighted maximum norm. The value of this constant $S = \frac{\psi C_1 + \alpha |K|}{\alpha \beta_0}$, where $\psi \in (0,1)$ and $C_1$ is the value with which the scale invariant iterative process is bounded. $\beta_0$ is the scale factor applied to the original process.
\end{lemm}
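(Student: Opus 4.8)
The plan is to reduce this statement to the classical stochastic approximation lemma of Jaakkola, Jordan and Singh, which treats the special case $K=0$ and concludes $\Delta_t\to 0$ w.p.\ 1. The only genuinely new ingredient here is the additive constant $K$ in condition 3: rather than a pure pseudo-contraction toward the origin, the conditional mean of the increment is now a contraction toward a ball of radius $K/(1-\gamma)$ in the weighted norm $\|\cdot\|_w$. Consequently the process can no longer collapse to zero but is instead driven to a constant limit $S$ whose magnitude is governed by $K$ and $\gamma$.

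First I would make the affine change of variable $\Delta_t'\triangleq\Delta_t-S$, where $S$ is the candidate limit singled out by the stationarity relation of the recurrence in Eq.~\ref{eq:lemma2}: at a fixed point $\Delta_{t+1}=\Delta_t=S$ forces $\E[F_t(x)\mid\mathscr{F}_t]=S$, and feeding this into condition 3 yields $\|S\|_w\le K/(1-\gamma)$, which is the estimate underlying the scale factor $\beta_0$ and the bound $C_1$ quoted in the statement. The shifted process still obeys $\Delta_{t+1}'=(1-\alpha)\Delta_t'+\alpha F_t'(x)$ with $F_t'(x)=F_t(x)-S$, and provided the mean increment is itself a genuine contraction toward $S$ (an equality, not merely the inequality bound) the constant $K$ is absorbed and $\|\E[F_t'(x)\mid\mathscr{F}_t]\|_w\le\gamma\|\Delta_t'\|_w$, i.e.\ $\Delta_t'$ satisfies the pure pseudo-contraction required by the classical lemma.

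Next I would check that the remaining hypotheses transfer to $\Delta_t'$. Conditions 1 and 2 are intrinsic to the step sizes $\alpha_t$ and to the finite index set $\mathscr{X}$, hence unchanged. The variance bound (condition 4) survives because shifting by a constant leaves $\mathrm{var}[F_t'(x)\mid\mathscr{F}_t]=\mathrm{var}[F_t(x)\mid\mathscr{F}_t]$, and one controls $1+\|\Delta_t\|_w^2$ by a constant multiple of $1+\|\Delta_t'\|_w^2$ using $\|\Delta_t\|_w\le\|\Delta_t'\|_w+\|S\|_w$ with $\|S\|_w$ finite. With all four conditions in force for $\Delta_t'$ (now effectively $K=0$), the classical result applies and gives $\Delta_t'\to 0$ w.p.\ 1, hence $\Delta_t\to S$. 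The decay-to-zero itself is obtained, as in that reference, through a Robbins--Siegmund / supermartingale convergence argument: condition 3 renders $\|\Delta_t'\|_w$ an almost-supermartingale whose predictable drift is summable thanks to $\sum_t\alpha_t^2<\infty$ from condition 1, while condition 4 keeps the martingale noise square-summable.

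I expect the main obstacle to be upgrading the inequality in condition 3 into the pure contraction needed for the shifted process, so as to justify convergence to a single constant rather than merely to the ball of radius $K/(1-\gamma)$. Because condition 3 is only an inequality, the stationarity relation does not by itself pin down a unique $S$; the argument must invoke the equality structure of the actual mean-field update in Eq.~\ref{eq:MTMFQ}, together with the contraction of the operator $\mathscr{H}^{MTMF}$ inherited from Lemma~\ref{lemma:nashoperator}, to identify $S$ and match it to the scale-invariant bound $C_1$ and scale factor $\beta_0$. Care is also needed to confirm that the weighted norm $\|\cdot\|_w$ appearing in conditions 3 and 4 is the same one under which $\mathscr{H}^{MTMF}$ contracts, so that the stochastic-approximation and operator-level ingredients compose cleanly.
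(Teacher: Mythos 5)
Your plan hinges on one step that the stated hypotheses cannot support: turning condition 3 into a pure pseudo-contraction around a shifted origin $S$. Condition 3 is a bound on the \emph{magnitude} $\|\E[F_t(x)\mid\mathscr{F}_t]\|_w$; it carries no information about the direction in which $\E[F_t]$ points, so there is no stationarity relation that singles out a unique $S$, and after the substitution $F_t'=F_t-S$ the triangle inequality only yields $\|\E[F_t'\mid\mathscr{F}_t]\|_w\le\gamma\|\Delta_t'\|_w+\bigl(K+(1+\gamma)\|S\|_w\bigr)$ --- the additive constant is not absorbed, it grows. You correctly identify this as the main obstacle, but the proposed repair (importing the equality structure of the update in Eq.~\ref{eq:MTMFQ} and the contraction of $\mathscr{H}^{MTMF}$) steps outside the lemma, which is stated for a generic process satisfying only the four inequality conditions. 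As written, the reduction to the $K=0$ case of Jaakkola--Jordan--Singh does not go through, so the centerpiece of the argument is missing. (From these hypotheses alone one can honestly obtain only a $\limsup$-type bound, i.e.\ eventual confinement to a ball of radius on the order of $K/(1-\gamma)$, not convergence to a single constant.)

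The paper's own route is different and avoids the shift entirely. It splits $\Delta_n=\delta_n+w_n$ with $r_n=F_n-\E[F_n\mid P_n]$, kills the zero-mean part $w_n$ with the classical variance/scaling lemma (its Lemma~\ref{lemma:appenlemma1}), and then bounds the drift part by the scalar linear recurrence $|\delta'_{n+1}|\le(1-\alpha)|\delta'_n|+\gamma\beta F C_1+\beta|K|$ after capping $\|\delta'_n\|$ by an arbitrary constant $C_1$. The constant $S=\frac{\psi C_1+\alpha|K|}{\alpha\beta_0}$ is then read off as the steady state of that \emph{bounding} recurrence (solved explicitly by the Z-transform in Theorem~\ref{theorem:Ztransform}) combined with the scale-invariance argument of Lemma~\ref{lemma:appenlemma2}; it is never characterized as a fixed point of the update itself. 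If you want to salvage your approach, you would need either to strengthen condition 3 to a vector-valued contraction toward a specified point (at which point your affine shift works cleanly and is arguably more transparent than the paper's), or to follow the paper in treating $K$ as a persistent forcing term in a one-sided recurrence and accept that the "constant" being converged to is the fixed point of the envelope, not of the process.
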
{}

\begin{proof}

This lemma follows from Theorem 1 in \cite{jaakkola1994convergence}. We provide the complete proof of this lemma in the Appendix C, 
highlighting the changes from the Theorem 1 in \cite{jaakkola1994convergence}. 

\end{proof}{}

\begin{theorem}\label{theorem:nasheq}
When updating $Q^j(s,a^j,\bar{a}^j_1,\ldots,\bar{a}^j_M)$ according to Equations~\ref{eq:MTMFQ}, \ref{eq:valuefunc}, \ref{eq:newmean} and \ref{eq:boltz}, for all agents $j$, the multi-agent $Q$ function will converge to a bounded distance of the Nash $Q$ function under the Assumptions~1, 2 and 3, expressed as,

\begin{equation*}
\begin{array}{l}
      \textbf{Q}_{*}(s,\boldsymbol{a}) - \textbf{Q}_t(s, \boldsymbol{a} )  \leq D - S.
\end{array}{}
\end{equation*}
\noindent
where $S = \frac{\psi C_1 + \alpha \gamma |D|}{\alpha \beta_0} $. Here $D = \frac{1}{2}L\epsilon$, from Theorem \ref{theorem:mtmfq}. The joint Nash $Q$ function is denoted as $\boldsymbol{Q}_* = [Q^1_*, \ldots, Q^N_*]$, where $Q^j_*$ denotes the Nash Q-value of the agent $j$ (value received by the agent $j$ in a Nash equilibrium), and $\boldsymbol{Q}_t = [Q^1_t, \ldots, Q^N_t] $.

\end{theorem}

\begin{proof}

The proof of convergence of this theorem is structurally similar to that discussed by the authors in \cite{hu2003nash} and \cite{pmlr-v80-yang18d}. We provide the proof here, with changes necessitated by the multiple type case.

Note that in Assumption 3, Equation \ref{eq:gop} corresponds to the global optimum and Equations \ref{eq:sp1} and \ref{eq:sp2} correspond to the saddle point. These assumptions are the same as those considered in \cite{hu2003nash}. Also note that the authors in \cite{hu2003nash} and \cite{pmlr-v80-yang18d} mention that Assumption 3 is a strong assumption to impose, which is needed to show the theoretical convergence, but this is not required to impose in practice. 

From \cite{hu2003nash}, we formally define the Nash operator $\mathscr{H}^{Nash}$ as,

\begin{equation}
\begin{array}{l}
\medmath{ \mathscr{H}^{Nash} \boldsymbol{Q}(s,\boldsymbol{a}) \triangleq \E_{s' \sim p} [\boldsymbol{r}(s,\boldsymbol{a}) + \gamma \textbf{v}^{Nash}(s')]}.
\end{array}
\end{equation}

\noindent
where $\boldsymbol{Q} \triangleq [Q^1,\ldots,Q^N]$ and $\boldsymbol{r}(s,\boldsymbol{a})  \triangleq [r^1(s,\boldsymbol{a}),\ldots,r^N(s,\boldsymbol{a})]$.

\noindent The Nash value function is $\textbf{v}^{Nash}(s) \triangleq  [v^1_{\boldsymbol{\pi_*}}(s), \ldots, v^N_{\boldsymbol{\pi_*}}(s)]
$. Here the joint Nash policy is represented as $\boldsymbol{\pi_*}$. The Nash value function is calculated with the assumption that all agents are following $\boldsymbol{\pi_*}$ from the initial state $s$. 

We need to apply Lemma \ref{lemma:jakkollalemma} to prove Theorem \ref{theorem:nasheq}. 
By subtracting $Q_{*}(s,\boldsymbol{a})$ on both sides of Equation \ref{eq:MTMFQ} and in relation to Equation~\ref{eq:lemma2}, 

\begin{equation}\label{eq:deltaandF}
\begin{array}{l}
\medmath{\Delta_t(x) = \textbf{Q}_t(s,a^j, \overline{a}^j_1, \ldots, \overline{a}^j_M ) - \textbf{Q}_{*}(s,\boldsymbol{a})}\\

\medmath{\textbf{F}_t(x) = \textbf{r}_t + \gamma \textbf{v}_t^{MTMF}(s_{t+1}) - \textbf{Q}_{*}(s_t, \boldsymbol{a}_t)}

\end{array}
\end{equation}

\noindent where $x \triangleq (s_t,\boldsymbol{a}_t)$ denotes the visited state-joint action pair at the time $t$. 

In Theorem \ref{theorem:mtmfq}, we proved a bound for the actual $Q$ function and the Multi Type Mean Field $Q$ function. We apply that to Equation~\ref{eq:deltaandF}, to get the following equation for $\Delta$,

\begin{equation}\label{eq:changeddelta}
    \begin{array}{l}
        
    \medmath{\Delta_t(x) = \textbf{Q}_t(s,a^j, \overline{a}^j_1, \ldots, \overline{a}^j_M ) - \textbf{Q}_{*}(s,\boldsymbol{a})}
    
    \\ 
    
    \medmath{\Delta_t(x) = \textbf{Q}_t(s,a^j, \overline{a}^j_1, \ldots, \overline{a}^j_M ) + \textbf{Q}_t(s, \boldsymbol{a} ) - \textbf{Q}_t(s, \boldsymbol{a} ) -  \textbf{Q}_{*}(s,\boldsymbol{a})}
    
       \\ 
    
    \medmath{\Delta_t(x) \leq |\textbf{Q}_t(s,a^j, \overline{a}^j_1, \ldots, \overline{a}^j_M ) - \textbf{Q}_t(s, \boldsymbol{a} )| + \textbf{Q}_t(s, \boldsymbol{a} ) -  \textbf{Q}_{*}(s,\boldsymbol{a})}

    \\
    
        \medmath{ \Delta_t(x) \leq \textbf{Q}_t(s, \boldsymbol{a} ) - \textbf{Q}_{*}(s,\boldsymbol{a}) + D}
    \end{array}
\end{equation}

\noindent where $D = \frac{1}{2} L\epsilon$.

The aim is to prove that the four conditions of Lemma \ref{lemma:jakkollalemma} hold and that $\Delta$ in Equation \ref{eq:changeddelta} converges to a constant $S$ according to Lemma \ref{lemma:jakkollalemma} and thus the MTMF $Q$ function in Equation \ref{eq:changeddelta} converges to a point whose distance to the Nash Equilibrium is bounded. In Equation~\ref{eq:lemma2}, $\alpha (t)$ refers to the learning rate and hence the first condition of Lemma \ref{lemma:jakkollalemma} is automatically satisfied. The second condition is also true as we are dealing with finite state and action spaces.

Let $\mathscr{F}_t$ be the $\sigma$-field generated by all random variables in the history time $t$ - $ (s_t, \alpha_t, a_t, r_{t-1}, \ldots ,s_1,\alpha_1, \textbf{a}_1,\textbf{Q}_0) $. 
Thus, $\textbf{Q}_t$ is a random variable derived from the historical trajectory up to time $t$.

To prove the third condition of Lemma \ref{lemma:jakkollalemma}, from Equation \ref{eq:deltaandF},
\begin{equation}\label{eq:Freduction}
\begin{array}{l}

\medmath{\boldsymbol{F}_t(s_t, \boldsymbol{a}_t) = \textbf{r}_t + \gamma \textbf{v}_t^{MTMF} - \textbf{Q}_{*}(s_t,\boldsymbol{a}_t)  }
\\
\medmath{ = \textbf{r}_t + \gamma \textbf{v}_t^{Nash} (s_{t+1}) - \textbf{Q}_{*}(s_t,\boldsymbol{a}_t) + \gamma[\textbf{v}_t^{MTMF}(s_{t+1}) - \textbf{v}_t^{Nash}(s_{t+1})] }\\
 
\medmath{= (\textbf{r}_t + \gamma \textbf{v}_t^{Nash} (s_{t+1}) - Q_{*}(s_t,\boldsymbol{a}_t)) + C_t(s_t, \boldsymbol{a}_t)
 \triangleq \boldsymbol{F}_t^{Nash}(s_t, \boldsymbol{a}_t) + C_t(s_t,\boldsymbol{a}_t)}.

\end{array}
\end{equation}

From Lemma \ref{lemma:nashoperator}, $\boldsymbol{F}_t^{Nash}$ forms a contraction mapping with the norm $|| \cdot ||_{\infty}$ being the maximum norm on $\boldsymbol{a}$. Thus, from Equation~\ref{eq:changeddelta} we get, 

\begin{equation}\label{eq:lemma1appl}
\begin{array}{l}

\medmath{ ||\E[\boldsymbol{F}_t^{Nash}(s_t,\boldsymbol{a}_t)|\mathscr{F}_t]||_{\infty} \leq \gamma || \boldsymbol{Q}_* - \boldsymbol{Q}_t||_\infty \leq \gamma || D - \Delta_t||_\infty}.

\end{array}
\end{equation}

Now, applying Equation \ref{eq:lemma1appl} in Equation \ref{eq:Freduction}, 

\begin{equation}\label{eq:lemma2proof}
\begin{array}{l}
 \medmath{|| \E[F_t(s_t, \boldsymbol{a}_t)|\mathscr{F}_t]||_\infty   =  || F_t^{Nash}(s_t,\boldsymbol{a}_t)|\mathscr{F}_t||_\infty + || \boldsymbol{C}_t(s_t,\boldsymbol{a}_t)|\boldsymbol{\mathscr{F}}_t||_\infty}  \\
\medmath{\leq \gamma || D - \Delta_t||_\infty + || \boldsymbol{C}_t(s_t,\boldsymbol{a}_t)|\boldsymbol{\mathscr{F}}_t||_\infty}  \\
\medmath{\leq \gamma || \Delta_t||_\infty + || \boldsymbol{C}_t(s_t,\boldsymbol{a}_t)|\boldsymbol{\mathscr{F}}_t||_\infty  + \gamma || D ||_\infty \leq \gamma || \Delta_t||_\infty + \gamma |D|}.

\end{array}
\end{equation}

Since we are taking the max norm, the last two terms in the right-hand side of Equation \ref{eq:lemma2proof} are both positive and finite. We can prove that the term $||C_t(s_t, \boldsymbol{a}_t)||$ converges to 0 w.p.1. The proof involves the use of Assumption 3 (refer to Theorem 1 in \cite{pmlr-v80-yang18d}). We use this fact in the last step of Equation \ref{eq:lemma2proof}.
Hence, the third condition of Lemma \ref{lemma:jakkollalemma} is satisfied. The value of constant $K = \gamma |D| = \gamma |\frac{1}{2}L\epsilon|$.

For the fourth condition we use the fact that the MTMF operator $\mathscr{H}^{MTMF}$  forms a contraction mapping. Hence, $\mathscr{H}^{MTMF}\textbf{Q}_* = \textbf{Q}_*$ and it follows that, 
\begin{equation}
\begin{array}{l}
\medmath{ \textbf{var}[\boldsymbol{F}_t(s_t, \boldsymbol{a}_t)|\mathscr{F}_t] = E[(r_t + \gamma \textbf{v}_t^{MTMF}(s_{t+1}) - \boldsymbol{Q}_*(s_t,\boldsymbol{a}_t))^2] }
\\
 \medmath{ = E[(\boldsymbol{r}_t + \gamma \textbf{v}_t^{MTMF}(s_{t+1}) - \mathscr{H}^{MTMF}(\boldsymbol{Q}_*))^2]} \\ 
 
 \medmath{= \textbf{var}[\boldsymbol{r}_t + \gamma \textbf{v}_t^{MTMF}(s_{t+1})|\mathscr{F}_t] \leq K_2(1 + ||\Delta_t||^2_{W})}.
 
 \end{array}
 \end{equation}

In the last step, the left side of the equation contains the reward and the value function as the variables. The reward is bounded by Assumption 1 and the value function is also bounded by being updated recursively by Equation \ref{eq:valuefunc} (MTMF is a contraction operator). So we can choose a positive, finite $K_2$ such that the inequality holds.

Finally, with all conditions met, it follows from Lemma \ref{lemma:jakkollalemma} that $\Delta_t$ converges to constant $S$ w.p.1. The value of this constant is $S = \frac{\psi C_1 + \alpha \gamma |D|}{\alpha \beta_0}$ from Lemma 2 and using the value of $K$ derived above. Therefore, from Equation \ref{eq:changeddelta} we get,

\begin{equation}
    \begin{array}{l}

    \medmath{ \textbf{Q}_{*}(s,\boldsymbol{a}) - \textbf{Q}_t(s, \boldsymbol{a} )  \leq D - S \leq \frac{1}{2}L\epsilon - S}.

    \end{array}
\end{equation}

Hence, the multi-agent $Q$ function converges to a point within a bounded distance from the Nash equilibrium of the stochastic game.  The distance is a function of the error in the type classification and the closeness of resemblance of each agent to its type. 
\end{proof}

\section{Implementation}

We propose two algorithms based on Q-learning to estimate the Multi Type Mean Field $Q$ function for known and unknown types. The algorithms, denoted as MTMFQ (Multi Type Mean Field Q-learning), train an agent $j$ to minimize the loss function $\mathcal{L} (\phi^j) = (y^j - Q_{\phi^j}(s, a^j, \overline{a}^j_1,\dots,\overline{a}^j_M))^2 $. Here $y^j = r^j + \gamma v^{\textit{MTMF}}_{\phi^j_{\_}}(s')$ (from Eq.~\ref{eq:MTMFQ}) is the target value used to calculate the temporal difference (T.D.) error using the weights $\phi^j_{\_}$. Now, the gradient is obtained as,

\begin{equation}
\begin{array}{l}
\medmath{\nabla_{\phi^j}\mathcal{L}(\phi^j) = 2(Q_{\phi^j}(s, a^j, \overline{a}^j_1,\dots,\overline{a}^j_M) - y^j)  \times 
\nabla_{\phi^j} Q_{\phi^j} (s, a^j, \overline{a}^j_1,\dots,\overline{a}^j_M).}
\end{array}
\end{equation}

Algorithm \ref{alg:MTMFQ} describes the Multi Type Mean Field Q-learning (MTMFQ) algorithm when agent types are known. In this algorithm, different groups of agents in the environment are considered as types. An agent models its relation to each type separately and ultimately chooses the action that provides maximum benefit in the face of competition against the different types. This is referred to as version 1 of MTMFQ. This algorithm deals with multiple types in contrast to MFQ described in the paper by Yang et al.~\shortcite{pmlr-v80-yang18d}. In Line 8, each agent is chosen, and its neighbours are considered. The neighbours are classified into different types and in each type a new mean action is calculated (Line 9). In Lines 14 -- 19, the Q networks are updated as done in common practice \cite{mnih2015human} for all the agents in the environment. At Line 12, the current actions are added to a buffer containing previous mean actions. 

Version 2 of MTMFQ (see Algorithm \ref{alg:MTMFQ-2}) deals with the second type of scenario, where the type of each agent is unknown. 
An additional step doing K-means clustering is introduced to determine the types. Once we recognize the type of each agent, the algorithm is very similar to Algorithm \ref{alg:MTMFQ}.  The clustering does not necessarily recognize the types correctly and the overall process is not as efficient as the known type case. But we will show that this way of approximate type determination is still better than using only a single mean field for all the agents. For the implementation we use neural networks, but it can be done without neural networks too. We only need a way to recursively update Equations \ref{eq:valuefunc}, \ref{eq:newmean} and \ref{eq:boltz}. 

Note that the total number of types in the environment is unknown, and the agent does not need to guess the correct number of types.  Generally, when more types are used, the approximate multi-agent $Q$ function will be closer to the exact Nash $Q$ function as shown by the bounds in Theorems~\ref{theorem:bound1}, \ref{theorem:bound2}, \ref{theorem:mtmfq} and \ref{theorem:nasheq}.  There is no risk of overfitting when using more types.  In the limit, when there is one type per agent, we recover the exact multi-agent $Q$ function. The only drawback is an increase in computational complexity. The code repository will appear at  \url{https://github.com/BorealisAI/mtmfrl}. 

\begin{algorithm}
\caption{Multi Type Mean Field Q-learning for known types}
\label{alg:MTMFQ}
\begin{algorithmic}[1] 

\STATE Initialize the number of types $M$ and total number of agents $N$.
\STATE Initialize the $Q$ functions (parameterized by weights) $ Q_{\phi^j}, Q_{\phi^j_{\_}}$, for all agents $j$. 
\STATE Initialize the mean action for each type $\overline{a}^j_1$, $\overline{a}^j_2, \ldots, \overline{a}^j_M$, for each agent $j \in {1, \ldots, N}$.
\STATE Initialize the total number of steps (T) and total number of episodes (E).
\WHILE {Episode $<$ E}
\WHILE{Step $<$ T}
\STATE For each agent $j$ choose action $a^j$ from $Q_{\phi^j}$ according to Eq. \ref{eq:boltz} with the current mean action for each type $\overline{a}^{j}_1, \ldots, \overline{a}^{j}_M $ and the exploration rate $\beta$.
\STATE For each agent j, compute the new mean action for each type  $\overline{a}^{j}_1, \ldots, \overline{a}^{j}_M$ according to Eq. \ref{eq:newmean}.
\STATE Execute the joint action $\textbf{a} = [a^1,\ldots,a^N]$. Observe the rewards $\textbf{r} = [r^1,\ldots,r^N]$ and the next state $s'$. 
\STATE Store $\langle s,\textbf{a},\textbf{r},s',\overline{\textbf{a}_1}, \ldots, \overline{\textbf{a}_M} \rangle$ in replay buffer $D$, where $\overline{\textbf{a}_i}$ is the mean action for type $i$ in the neighbourhood. The $\textbf{a}$ captures all the $N$ agents. 
\ENDWHILE
\WHILE{$j$ = $1$ to $N$}
\STATE Sample a minibatch of $K$ experiences  $\langle s,\textbf{a},\textbf{r},s',\overline{\textbf{a}_1}, \ldots, \overline{\textbf{a}_M} \rangle$
from $D$.
\STATE Sample action $a^j_{\_}$ from $Q_{\phi^j_{\_}}$ with $\overline{a}^{j}_{i\_}  \leftarrow {\overline{a}^{j}_{i}}$ for each type $i$. 
\STATE Set $y^j = r^j + \gamma v^{MTMF}_{\phi^j_{\_}}(s')$ according to Eq. \ref{eq:MTMFQ}.
\STATE Update the Q network by minimizing the loss $L(\phi^j) = \frac{1}{K} \sum (y^j - Q_{\phi^j}(s^j, a^j, \overline{a}^{j}_1,
\ldots, \overline{a}^j_M))^2$.
\ENDWHILE
\STATE Update the parameters of the target network for each agent $j$ with learning rate $\tau$; $\phi^{j}_{\_} \leftarrow \tau \phi^j + (1 - \tau) \phi^j_{\_}$.
\ENDWHILE
\end{algorithmic}
\end{algorithm}

\begin{algorithm}
\caption{Multi Type Mean Field Q-learning for unknown types}
\label{alg:MTMFQ-2}
\begin{algorithmic}[1] 

\STATE Initialize the number of types $M$ and total number of agents $N$.
\STATE Initialize the $Q$ functions (parameterized by weights) $ Q_{\phi^j}, Q_{\phi^j_{\_}}$, for all agents $j$. 
\STATE Initialize the mean action for each type $\overline{a}^j_1$, $\overline{a}^j_2, \ldots, \overline{a}^j_M$, for each agent $j \in {1, \ldots, N}$.
\STATE Initialize the total number of steps (T) and total number of episodes (E).
\STATE Initialize every agent to a type at random. Initialize an array $A$ containing the previous action of all agents.  
\STATE Maintain a buffer $B$ for storing the last $C$ actions of all agents. $C$ is determined by the conditions of the environment. Initialize all values to 0. 
\WHILE {Episode $<$ E}
\WHILE{steps $<$ T}
\STATE For each agent $j$ choose action $a^j$ from $Q_{\phi^j}$ according to Eq.~\ref{eq:boltz} with the current mean action for each type $\overline{a}^{j}_1, \ldots, \overline{a}^{j}_M$ and the exploration rate $\beta$.
\STATE For each agent j, compute the new mean action for each type  $\overline{a}^{j}_1, \ldots, \overline{a}^{j}_M$ according to Eq. \ref{eq:newmean}.
\STATE Execute the joint action $\textbf{a} = [a^1,\ldots,a^N]$. Observe the rewards $\textbf{r} = [r^1,\ldots,r^N]$ and the next state $s'$. 
\STATE Store $\langle s,\textbf{a},\textbf{r},s',\overline{\textbf{a}_1}, \ldots, \overline{\textbf{a}_M} \rangle$ in replay buffer $D$, where $\overline{\textbf{a}_i}$ is the mean action for type $i$ in the neighbourhood. The $\textbf{a}$ captures all the $N$ agents. 
 
\STATE Store each action $[a^1,\ldots,a^N]$ in the array $A$. Update the Buffer $B$ with the last action taken by all agents.  
\STATE Perform a K-means clustering on $B$ with  the number of clusters equal to the number of types $M$. 
\STATE Reassign the agents to different types based on the cluster in K-means. 
\ENDWHILE
\WHILE{$j$ = $1$ to $N$}
\STATE Sample a minibatch of $K$ experiences  $\langle s,\textbf{a},\textbf{r},s',\overline{\textbf{a}_1}, \ldots, \overline{\textbf{a}_M} \rangle$
from $D$.
\STATE Sample action $a^j_{\_}$ from $Q_{\phi^j_{\_}}$ with $\overline{a}^{j}_{i\_}  \leftarrow {\overline{a}^{j}_{i}}$ for each type $i$. 
\STATE Set $y^j = r^j + \gamma v^{MTMF}_{\phi^j_{\_}}(s')$ according to Eq. \ref{eq:MTMFQ}.
\STATE Update the Q network by minimizing the loss $L(\phi^j) = \frac{1}{K} \sum (y^j - Q_{\phi^j}(s^j, a^j, \overline{a}^{j}_1,\ldots,\overline{a}^j_M))^2$.
\ENDWHILE
\STATE Update the parameters of the target network for each agent $j$ with learning rate $\tau$; $\phi^{j}_{\_} \leftarrow \tau \phi^j + (1 - \tau) \phi^j_{\_}$.
\ENDWHILE
\end{algorithmic}
\end{algorithm}

\section{Experiments and Results}

\begin{figure}
	\subfloat[Game Domain Multi Team Battle]{{\includegraphics[width=4cm, height=3cm]{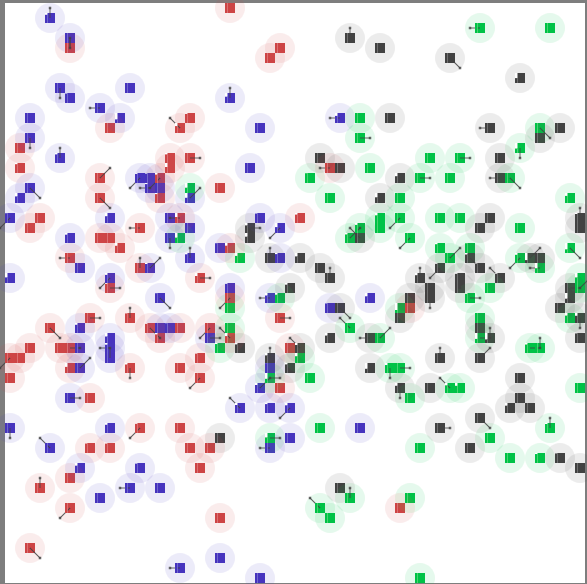} }}
	\subfloat[Multi Team Battle Training]{{\includegraphics[width=4cm, height=3cm]{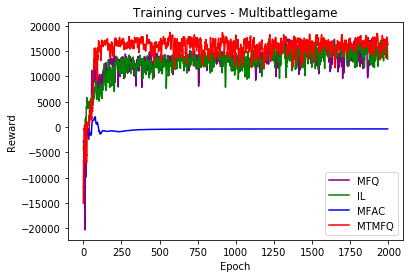} }}\\
	\subfloat[Win rate for each algorithm]{{\includegraphics[width=4cm, height=3cm]{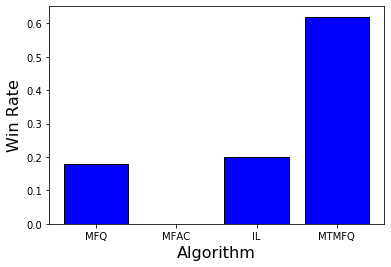} }}
	\subfloat[Total rewards taken as an average per episode]{{\includegraphics[width=4cm, height=3cm]{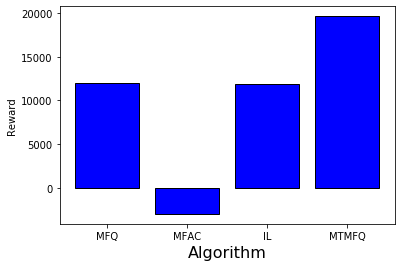} }}
  \caption{Multi Team Battle Game Results.}%
	\label{fig:multibattle}
\end{figure}

\begin{figure}
	\subfloat[Game Domain Battle-Gathering]{{\includegraphics[width=4cm, height=3cm]{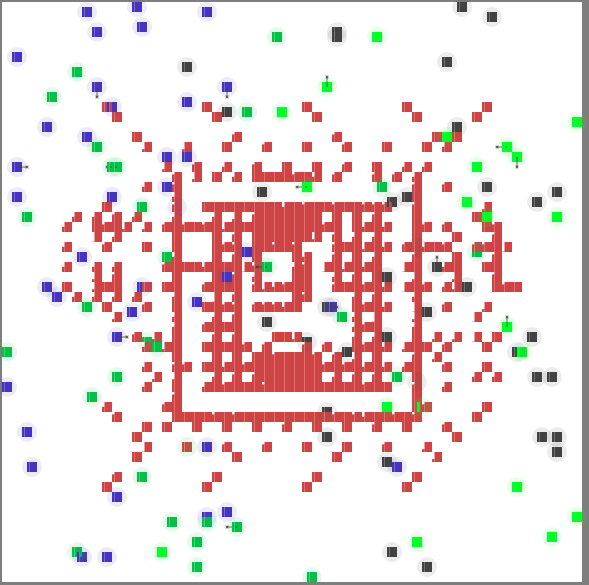} }}
	\subfloat[Battle-Gathering training]{{\includegraphics[width=4cm, height=3cm]{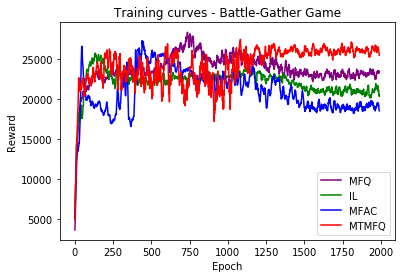} }}\\
	\subfloat[Win rate for each algorithm]{{\includegraphics[width=4cm, height=3cm]{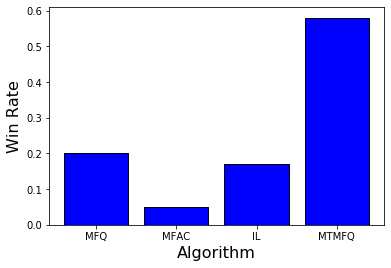} }}
	\subfloat[Total Rewards taken as an average per episode.]{{\includegraphics[width=4cm, height=3cm]{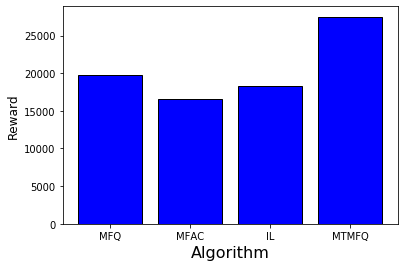} }}
  \caption{Battle-Gathering Game Results.}%
	\label{fig:gathering}
\end{figure}

\begin{figure}
	\subfloat[Game Domain Predator Prey]{{\includegraphics[width=4cm, height=3cm]{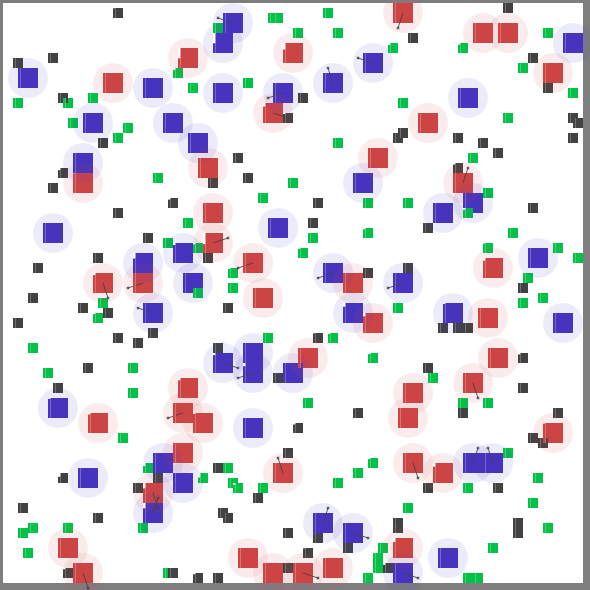} }}
	\subfloat[Predator Prey Training]{{\includegraphics[width=4cm, height=3cm]{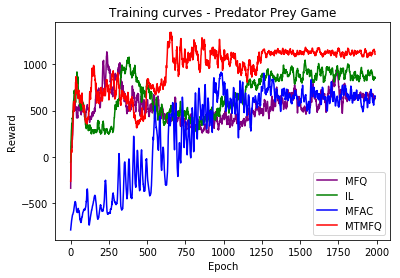} }}\\
	\subfloat[Win rate for each algorithm]{{\includegraphics[width=4cm, height=3cm]{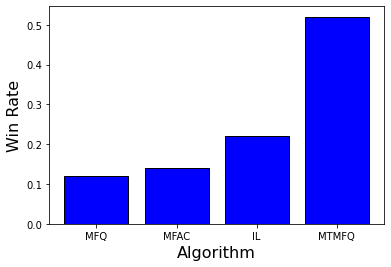} }}
	\subfloat[Total rewards taken as an average per episode]{{\includegraphics[width=4cm, height=3cm]{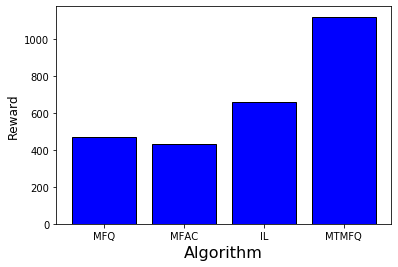} }}
  \caption{Predator Prey Game Results.}%
	\label{fig:predatorprey}
\end{figure}

We report results with three games designed within the MAgents framework: the Multi Team Battle, Battle-Gathering and the Predator Prey domains. 
In the first two games, Multi Team Battle and the Battle-Gathering game, the conditions are such that the different groups (or teams) are fully known upfront. In the third game, the conditions are such that the types of the agents are initially unknown. Hence, the agents must also learn the identity of the opponent agents during game play. Multi Team Battle and Gathering are two separately existing MAgent games, which have been combined to obtain the Battle-Gathering game used in this paper. Predator Prey domain is also obtained from combining Multi Team Battle with another existing MAgent game (Pursuit). In preparation for each game, agents train for 2000 episodes of game play against different groups training using the same algorithm, which is referred to as the first stage.  Next, in the second stage, they enter into a faceoff against other agents trained by other algorithms where they fight each other for 1000 games. We report the results for both stages. We repeat all experiments 50 times and report the averages. As can be seen from the nature of the experiments, variances can be quite large across individual experimental runs. 

In the first game (Multi Team Battle, see Figure~\ref{fig:multibattle}(a)), there are four teams (different colours in Figure~\ref{fig:multibattle}(a)) fighting against each other to win a battle. Here, agents belonging to one team are competing against agents from other teams and cooperating with agents of the same team. During the first stage, a team does not know how the other teams will play during the faceoff --- so it clones itself to three additional teams with slightly different rewards in order to induce different strategies. This is similar to self play. Each team receives a reward equal to the sum of the local rewards attributed to each agent in the team. The reward function is defined in such a way that it encourages local cooperation in killing opponents and discourages getting attacked and dying in the game. The reward function for different agent groups are also subtly different (refer to Appendix B for the details). Let us call each of these groups: Group A, Group B, Group C, and Group D. We maintain a notion of favourable opponents and each group gets slightly higher rewards for killing the favourable opponents than the others. Four algorithms, namely MFQ \cite{pmlr-v80-yang18d}, MFAC \cite{pmlr-v80-yang18d}, Independent Q Learning (IL) \cite{tan1993multi}, and MTMFQ are trained separately where each of these groups trains its own separate network using the same algorithm (all agents within a group train the same network). We start all the battles with 72 agents of each group for training and testing. Group A from MTMFQ, Group B from IL, Group C from MFQ, and Group D from MFAC enter the faceoff stage where they fight each other for 1000 games. Each game (or episode) has a maximum of 5000 steps and a game is considered to be won by the group/groups that have the most number of agents alive at the end of the game. 

The results of the first training stage are reported in Figure~\ref{fig:multibattle}(a).  We report the cumulative rewards from all agents in Group A for each algorithm.  These rewards are for Group A against the other 3 groups. Since the different groups are just clones of each other the reward curves for other groups are similar to that of Group A. In the training stage, the teams trained by different algorithms did not play against each other, but simply against the cloned teams trained by the same algorithm.  At the beginning of training, for about 100 episodes the agents are still exploring, and their strategies are not well differentiated yet. As a result, MTMFQ's performance is still comparable to the performance of other algorithms. At this stage, the assumption of a single type is fine. As training progresses, each group begins to identify the favourable opponents and tries to make a targeted approach in the battle. When such differences exist across a wide range of agents, the MTMFQ algorithm shows a better performance than the other techniques as it explicitly considers the presence of different types in the game. Overall, we observe that MTMFQ has a faster convergence than all other algorithms, and it also produces higher rewards at the end of the complete training. This shows that MTMFQ identifies favourable opponents early, but the other algorithms struggle longer to learn this condition. The MFAC algorithm is the worst overall. This is consistent with the observation by Yang et al.~\shortcite{pmlr-v80-yang18d}, where the authors give particular reasons for this bad performance, including the greedy in the limit with infinite exploration (GLIE) assumption and a positive bias of Q-learning algorithms. This algorithm is not able to earn an overall positive reward upon the complete training. Figure~\ref{fig:multibattle}(c) shows the win rate of the teams trained by each algorithm (MTMFQ, MFQ, MFAC, and IL) in a direct faceoff of 1000 episodes. In the face off, each group is trained by a different algorithm and with a different reward function that induces different strategies.  The only algorithm that handles different strategies among the opponent teams is MTMFQ, and therefore it dominates the other algorithms with a win rate of 60\%. Figure~\ref{fig:multibattle}(d) reinforces this domination.

The second domain is the  Battle-Gathering game (Figure~\ref{fig:gathering}(a)). In this game, all the agent groups compete for food resources that are limited (red denotes food particles and other colours are competing agents) in addition to killing its opponents as in the Multi Team Battle game. Hence, this game is harder than the first one. All the training and competition are similar to the first game.

Figure~\ref{fig:gathering}(b) reports the results of training in the Battle-Gathering game. Like the Multi Team Battle game, we plot the rewards obtained by Group A while fighting other groups for each algorithm. Again, MTMFQ shows the strongest performance in comparison to the other three algorithms.  The MFQ technique performs better than both MFAC and IL. In this game, MTMFQ converges in around 1500 episodes, while the other algorithms take around 1800 episodes to converge. The win rates shown in Figure~\ref{fig:gathering}(c) and the total rewards reported in Figure~\ref{fig:gathering}(d) also show the dominance of MTMFQ. 

The third domain is the multiagent Predator Prey (Figure~\ref{fig:predatorprey}(a)). Here we have two distinct types of agents --- predator and prey, each having completely different characteristics. The prey are faster than the predators and their ultimate goal is to escape the predators. The predators are slower than the prey, but they have higher attacking abilities than the prey. So the predators try to attack more and kill more prey. We model this game as an unknown type scenario where the types of the other agents in the competition are not known before hand (refer to Appendix B for more details). The MTMFQ algorithm plays the version with unknown types (Algorithm \ref{alg:MTMFQ-2}). Here we have four groups with the first two groups (Groups A and B) being predators and the other two groups (Groups C and D) being prey. Each algorithm will train two kinds of predator agents and two kinds of prey agents. All these agents are used in the playoff stage. In the playoff stage we have 800 games where we change the algorithm of predator and prey at every 200 games to maintain a fair competition. For the first 200 games, MTMFQ plays Group A, MFQ plays Group B, IL plays Group C, and MFAC plays Group D. In the next 200 games, MFAC plays Group A, MTMFQ plays Group B, MFQ plays Group C and IL plays Group D, and so on. We start all training and testing episodes with 90 prey and 45 predators for each group. Winning a game in the playoff stage is defined in the same way as the previous two games. Notice that this makes it more fair, as predators have to kill a lot more prey to win the game (as we start with more prey than predators) and prey have to survive longer. In this setup, the highly different types of agents make type identification easier for MTMFQ (as the types are initially unknown). The prey execute more move actions while the predators execute more attack actions. This can be well differentiated by clustering.

The results of the first training stage are reported in Figure~\ref{fig:predatorprey}(b). MTMFQ has comparable or even weaker performance than other algorithms in the first 600 episodes of the training and the reasoning is similar to the reasoning in the Multi Team Battle game (the agent strategies are not sufficiently differentiated for multiple types to be useful). Notice that for this game, MTMFQ takes many more episodes than the earlier two games to start dominating. This is because of the inherent hardness of this domain compared to the other domains (very different and unknown types). Similar to observations in the other domains, MTMFQ converges earlier (after around 1300 episodes as opposed to 1700 for the other algorithms). This shows its robustness to the different kinds of opponents in the environment. MTMFQ also gains higher cumulative rewards than the other algorithms. Win rates in Figure~\ref{fig:predatorprey}(c) show that MTMFQ still wins more games than the other algorithms, but the overall percentage of games won is less than the other domains. Thus, irrespective of the difficulty of the challenge we can see that MTMFQ has an upper hand. The lead of MTMFQ is also observed in Figure~\ref{fig:predatorprey}(d).

\section{Conclusion}

In this paper, we extended the notion of mean field theory to multiple types in MARL. We demonstrate that reducing many agent interactions to simple two agent interactions does not give very accurate solutions in environments where there are clearly different teams/types playing different strategies. We perform suitable experiments using MAgents and demonstrate superior performances using a type based approach. We hope that this paper will provide a different dimension to the mean field theory based MARL research.

 One limitation of our approach is that it is computationally more expensive than the mean field reinforcement learning method without types. If we really have only one type in the environment, then our method would add more compute time and not necessarily produce a better result. As future work we would like to extend this work for completely heterogeneous agents with different action spaces as well. StarCraft is one example of such a domain. Our work would be well suited for this scenario as clustering would be even easier. Another approach would be to consider sub types, further dividing types.


\bibliographystyle{ACM-Reference-Format}
\bibliography{bibfile}

 \newpage
\clearpage

\section*{Appendix A: Illustrative Example}

This section provides an example of a simulated scenario where the Multi Type Mean Field algorithm is more useful than the simple mean field algorithm.

Consider a game in which the central agent has to decide the direction of spin. The domain is stateless. The spin direction is influenced by the direction of spin of its A,B,C and D neighbours (four neighbours in total). Here we denote A as the neighbour to the left of the agent, B as the neighbour to the top of the agent, C as the neighbour to the right of the agent, and D as the neighbour to the bottom of the agent. The neighbour agents spin in one direction at random. If the agent spins in the same direction as both of its C and D neighbours, the agent gets a reward of -2 regardless of what the A and B are doing. If the spin is in the same direction as both of its A and B neighbours, the agent gets a +2 reward unless the direction is not the same as the one used by both of its C and D neighbours. All other scenarios result in a reward of 0. So the agent in Grid A in Figure~\ref{fig:counterexampleimages} will get a -2 for the spin down (since the C and D neighbours are spinning down) and a +2 for a spin up (since the A and B neighbours are spinning up). In Grid B of Figure~\ref{fig:counterexampleimages} the agent will get a -2 for spin up and a +2 for spin down. It is clear that the best action in Grid A is to spin up and the best action in Grid B is to spin down.

Here we have a notion of multi stage game with many individual stages. In each stage of a multi stage game, one or more players take one action each, simultaneously and obtain rewards.

Consider a sequence of stage games in which the agent gets Grid A for every 2 consecutive stages and then the Grid B for the third stage. The goal of the agent is to take the best possible action at all stages. Let us assume that the agent continues to learn at all stages, and it starts with Q values of 0. We apply MFQ (Equation \ref{eq:MFQ}) and MTMFQ (Equation \ref{eq:MTMFQ}) and show why MFQ fails, but MTMFQ succeeds in this situation. We are going to calculate the Q values for the 3 stages using both the MFQ (from \cite{pmlr-v80-yang18d}) and the MTMFQ update rules. We approximate the average action using the number of times the neighbourhood agents spin up. In the MTMFQ we use the A, B neighbours as the type 1 and the C, D neighbours as the type 2.

Applying MFQ: 

In the first stage, 

$$Q_1^j(\uparrow, \overline{a}^j =2) = 0 + 0.1(2-0) = 0.2 $$

$$Q_1^j(\downarrow, \overline{a}^j =2) = 0 + 0.1(-2-0) = -0.2 $$

Thus, the agent will choose to spin up in the first stage (correct action). 

For the second stage, 
$$Q_2^j(\uparrow, \overline{a}^j =2) = 0.38 $$

$$Q_2^j(\downarrow, \overline{a}^j =2) = -0.38 $$

Again the agent will choose to spin up in the second stage (correct action). 

For the third stage, 
$$Q_3^j(\uparrow, \overline{a}^j =2) = 0.38 + 0.1(-2-0.38) = 0.142 $$

$$Q_3^j(\downarrow, \overline{a}^j =2) = -0.38 + 0.1(2+ 0.38) =  -0.142 $$

Here again the agent will choose to make the spin up (wrong action). 

Now coming to MTMFQ updates, for the first stage, 

$$Q_1^j(\uparrow, \overline{a}^{j}_1 = 2, \overline{a}^{j}_2 = 0) = 0 + 0.1(2-0) = 0.2 $$

$$Q_1^j(\downarrow, \overline{a}^{j}_1 = 2, \overline{a}^{j}_2 = 0) = 0 + 0.1(-2-0) = -0.2 $$

Here the agent will spin up (correct action). 

For the second stage, 

$$Q_2^j(\uparrow, \overline{a}^{j}_1 = 2, \overline{a}^{j}_2 = 0) = 0.38 $$

$$Q_2^j(\downarrow, \overline{a}^{j}_1 = 2, \overline{a}^{j}_2 = 0) = -0.38 $$

Again the agent will spin up (correct action). 

For the third stage, 

$$Q_3^j(\uparrow, \overline{a}^{j}_1 = 0, \overline{a}^{j}_2 = 2) = -0.2 $$

$$Q_3^j(\downarrow, \overline{a}^{j}_1 = 0, \overline{a}^{j}_2 = 2) = 0.2 $$

Now it can be seen that the agent will spin down in this case (correct action). 

Thus, the MFQ agent will make one wrong move out of 3 moves whereas the MTMFQ agent will make the right move all the time. In situations like these, where the relationship of the agent with different neighbour agents is different, the MFQ algorithm would fail. The differences would be captured by MTMFQ which would take more efficient actions. This shows an example where MTMFQ outperforms MFQ.

\begin{figure}
\subfloat[Grid A]
{{
	\begin{tikzpicture}
	\draw (0, 0) grid (3, 3);
	\draw[very thick, scale=1] (1, 2) grid (2, 1);
    \draw [->,, red, very thick](0 +1.5 , 0.9) --
              (0 +1.5, 0.1);
    \draw [->,, red, very thick](0 +0.5 , 1.1) --
              (0 +0.5, 1.9);
    \draw [->,, red, very thick](0 +1.5 , 2.1) --
              (0 +1.5, 2.9);
    \draw [->,, red, very thick](0 +2.5 , 1.9) --
              (0 +2.5, 1.1);
              
    \draw (1.5,1.5) circle (8pt);

    \end{tikzpicture}
}}
\subfloat[Grid B]
{{
	\begin{tikzpicture}
	\draw (0, 0) grid (3, 3);
	\draw[very thick, scale=1] (1, 2) grid (2, 1);
    \draw [->,, red, very thick](0 +1.5 , 0.1) --
              (0 +1.5, 0.9);
    \draw [->,, red, very thick](0 +0.5 , 1.9) --
              (0 +0.5, 1.1);
    \draw [->,, red, very thick](0 +1.5 , 2.9) --
              (0 +1.5, 2.1);
    \draw [->,, red, very thick](0 +2.5 , 1.1) --
              (0 +2.5, 1.9);
     \draw (1.5,1.5) circle (8pt);

    \end{tikzpicture}
}}
  \caption{A counter example to show the failure of MFQ and success of MTMFQ.}
	\label{fig:counterexampleimages}
\end{figure}
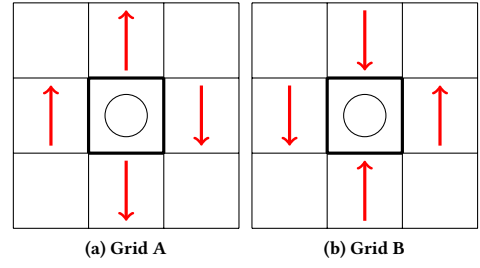

\section*{Appendix B: Experimental Details}

This section gives more details about the experimental conditions, especially the reward function.

For the Multi Team Battle domain, the agents in all of these groups get a reward of -0.01 for each move action, and a reward of -1 for dying. Attacking an empty grid has a reward of -0.1. Each of these members has a power of 10 and a damage of 2. When an agent loses all its powers, it dies. The power is like health that the agents maintain which gets depleted on being attacked. The agents can also recover (regain health points) from the attack using a step recovery rate. This is set to be 0.1. The positive rewards for attacking another agent is cyclic in nature with each group preferring to attack and kill a particular opponent group more than others. Group A has a positive reward of 0.2, 0.3, and 0.4 for attacking an agent of group B, C and D respectively, and it gets a reward of 80, 90, and 100 for killing a member of group B, C, and D respectively. Here we will call Group D as the favourable opponent of Group A as A gets more returns for fighting or killing D. The group B has a positive reward of 0.2, 0.3, and 0.4 for attacking a member of group C, D, and A respectively and a positive reward of 80, 90, and 100 for killing a member of Group C, D, and A respectively. Thus, Group A is the favourable opponent of Group B. Group C has a reward of 0.2, 0.3, and 0.4 for attacking Groups D, A, and B respectively. It gets a kill reward of 80, 90, and 100 for killing agents of Group D, A and B respectively. For Group D the order is Group A, B and C with a attack reward of 0.2, 0.3, and 0.4 and kill reward of 80, 90, and 100. 

For the Battle-Gathering game, the reward function is similar with an addition of each agent getting a +80 for collecting a food resource. The food resources are stationary objects, but each food resource has a power similar to agents. This power has to be reduced by damage before the food can be captured. Capturing a food will constitute making repeated efforts to gain the food resource by attacking the grid containing food. The agents would get a +0.5 for making such an attack.

For the Predator Prey domain, the predators have a power of 10 and a speed of 2 with an attack range of 2 (they can attack within a distance of 2 units) and they get a dead penalty of -0.1 and an attack penalty of -0.2. The step recovery rate is 0.1. The prey have a faster speed of 2.5 and an attack range of 0. All agents get a reward of +5 for killing agents belonging to other groups. Additionally, Group A gets a reward of +0.5 for attacking a member of Group B (since Group B is also a predator, Group A does not prefer to attack that), +1 for attacking a member of Group C, +3 for attacking a member of Group D (though both Groups C and D are prey, A prefers D to C).  Group B gets 0.5 for attacking A, but gets +1 for attacking D and +3 for attacking C. Group B prefers Group C to Group D. Every attack action entails a punishment for the (attack) receiver which is equal in value to the reward for the attacker.

\section*{Appendix C: Proof of Lemma \ref{lemma:jakkollalemma} used in this paper}

This section explains the changed Lemma used in this paper compared to Theorem 1 in \cite{jaakkola1994convergence}.

We state and prove a general theorem for a stochastic process before we begin the proof. 

\begin{theorem}\label{theorem:Ztransform}
If we have a stochastic process of the form 
\begin{equation}\label{eq:stochasticprocess}
    \begin{array}{l}
         y_{n+1} - py_{n} = d 
    \end{array}{}
\end{equation}{}
where $n$ goes from $0$ to $\infty$, then the general solution of this process can be given by 

\begin{equation}
    \begin{array}{l}
         \textrm{ if } p = 1: \\ \\
         
         y_n = dn + a;  \\ \\

         \textrm{ if } p \neq 1: \\ \\
         
         y_n = \frac{d}{1-p} + (a - \frac{d}{1-p}) p^n,

    \end{array}{}
\end{equation}{}
where $y_0 = a$.
\end{theorem}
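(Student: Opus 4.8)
The plan is to treat this as a standard first-order linear inhomogeneous recurrence and solve it explicitly, since the recurrence together with the initial condition $y_0 = a$ determines the entire sequence uniquely. Hence it suffices to exhibit a closed-form expression that satisfies both $y_{n+1} - p y_n = d$ and the condition $y_0 = a$; uniqueness then guarantees that this expression is \emph{the} solution. I would obtain the expression by direct iteration (unrolling) of the recurrence rather than by guessing, as this makes the origin of the two cases transparent.

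First I would rewrite the recurrence as $y_{n+1} = p y_n + d$ and unroll it down to the base case, which gives $y_n = p^n a + d \sum_{i=0}^{n-1} p^i$. The problem then reduces entirely to evaluating the finite geometric sum $\sum_{i=0}^{n-1} p^i$, and this is exactly the point at which the case split arises. When $p \neq 1$ the sum equals $(1-p^n)/(1-p)$, and substituting and regrouping the terms carrying $p^n$ yields $y_n = \frac{d}{1-p} + \left(a - \frac{d}{1-p}\right) p^n$, as claimed. When $p = 1$ the closed form for the geometric sum is invalid (its denominator vanishes) and the sum simply collapses to $n$, giving $y_n = a + dn$.

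As an independent check I would verify the two candidate formulas directly by induction on $n$: the base case $n=0$ is immediate in both branches, and the inductive step is a one-line substitution into $y_{n+1} = p y_n + d$. The only subtlety — and the single place where any care is needed — is the degenerate case $p = 1$, where the constant particular solution $d/(1-p)$ blows up and must be replaced by the term $dn$ that is linear in $n$ (the discrete analogue of resonance in a linear ODE). Beyond isolating this case, the argument is entirely routine and I expect no real obstacle.
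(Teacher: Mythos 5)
Your proof is correct, but it takes a genuinely different route from the paper. The paper solves the recurrence by applying the $Z$-transform: it transforms $y_{n+1}-py_n=d$ into $(z-p)Y(z)=\frac{dz}{z-1}+za$, performs a partial-fraction decomposition, and inverts term by term, handling $p=1$ separately by observing that the sequence is then an arithmetic progression. You instead unroll the recurrence directly to $y_n = p^n a + d\sum_{i=0}^{n-1}p^i$ and evaluate the finite geometric sum, with the case split arising naturally from whether that sum's closed form $\frac{1-p^n}{1-p}$ is valid; you also offer an inductive verification. The two arguments arrive at the same formulas. Your approach is more elementary and self-contained --- it needs no transform machinery, no convergence or region-of-convergence considerations for $Y(z)$, and the degenerate case $p=1$ falls out of the geometric sum rather than being patched in afterward. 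The paper's transform approach buys a systematic recipe that generalizes more readily to higher-order linear recurrences with more complicated forcing terms, but for this first-order constant-coefficient recurrence it is heavier than necessary. Both proofs establish exactly what the paper needs downstream (namely that the bounded iterative process in Lemma~\ref{lemma:appenlemma3} converges to $\frac{d}{1-p}$ when $p\in(0,1)$), so your argument would serve as a valid drop-in replacement.
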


\begin{proof}

Consider the expression, 

\begin{equation}
    \begin{array}{l}
         y_{n+1} - py_{n} = d.
          \end{array}{}
\end{equation}{}

Z transforms will be applied to solve this expression. Z transform is a generalized version of Discrete Time Fourier Transform that transforms the variables into a new subspace where solving the equation is easier. Once we obtain a solution, we can apply inverse Z transforms to get the solution in terms of the original variables.

\begin{equation}\label{eq:ztransform}
    \begin{array}{l}
         
    zY(z) - zy_0 - pY(z)  = \frac{dz}{z-1}
         \\ \\

         (z-p)Y(z) = \frac{dz}{z-1} + za
         
         \\ \\
         
     Y(z) = \frac{dz}{(z-1)(z-p)} + \frac{az}{z-p}.

          \end{array}{}
\end{equation}{}

Considering $p \neq 1$ we can rewrite the Equation \ref{eq:ztransform} as

\begin{equation}
    \begin{array}{l}
      Y(z) = dz[\frac{1}{(1-p)(z-1)} - \frac{1}{(1-p)(z-p)}] + \frac{az}{z-p}
          \\ \\
          
         Y(z) = \frac{d}{1-p} \frac{1}{1-z^{-1}} - \frac{d}{1-p}\frac{1}{1-pz^{-1}} + \frac{a}{1-pz^{-1}}
         
         \\ \\ 
         
         Y(z) = \frac{d}{1-p} \frac{1}{1-z^{-1}} + (a - \frac{d}{1-p})(\frac{1}{1-pz^{-1}}).

    \end{array}{}
\end{equation}{}

Now taking the inverse Z transform 

\begin{equation}
    \begin{array}{l}
         y_n =  [\frac{d}{1-p} + (a-\frac{d}{(1-p)}) p^n].
    \end{array}{}
\end{equation}{}

The above result is for the case where $p\neq1$. If $p=1$, see that the Equation \ref{eq:stochasticprocess} forms an arithmetic progression whose general term is $a+nd$.

\end{proof}{}

Now from Theorem \ref{theorem:Ztransform}, notice that if we want a general stochastic process of that form to converge we need the coefficient $p$ to be a fraction (as we take a limit to $\infty$ in the solution only a $p$ which is a fraction will give a converged result). Note that this result only needs $d$ to be finite, and it can be any finite number. If we iterate to infinity then we can apply the limit to the solution which will converge to $y_n = \frac{d}{1-p}$.

\begin{lemm}\label{lemma:appenlemma1}

A random process
\begin{equation} \label{eq:applemma}
    \begin{array}{l}
    w_{n+1}(x) = (1 - \alpha_n(x))w_n(x) + \beta_n(x) r_n(x)
    \end{array}
\end{equation}

\noindent converges to zero w.p.1, if the following conditions are satisfied: 
\begin{equation}
    \begin{array}{l}
         1) \hspace{10mm} \sum_n \alpha_n(x) = \infty, \sum_n \alpha^2_n(x) < \infty, \\  
         \hspace{13mm} \sum_n\beta_n(x) = \infty \textrm{ \textit{and} } \sum_n \beta^2_n(x) < \infty \\
         \hspace{13mm} \\
         \textrm{uniformly over x w.p.1} \\
        \\         
         2) \hspace{10mm} \E \{r_n(x) | P_n, \beta_n\} = 0 \\ 
         
         \hspace{13mm} \textrm{  \textit{and } }  \E\{r_n^2(x) |P_n, \beta_n\} \leq C \hspace{13mm}\\
         \\ 
         \textrm{w.p.1}, \textrm{where} \\
         
         P_n = \{w_n, w_{n-1}, \ldots, r_{n-1}, r_{n-1}, \ldots, \\ \alpha_{n-1}, \alpha_{n-2}, \ldots, \beta_{n-1}, \beta_{n-2}, \ldots\}
         \\
         
    \end{array}{}
\end{equation}

All the random variables are allowed to depend on the past $P_n$. $\alpha_n(x)$ and $\beta_n(x)$ are assumed to be non-negative and mutually independent given $P_n$. 

\end{lemm}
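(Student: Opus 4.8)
The plan is to use a quadratic Lyapunov argument together with an almost-supermartingale convergence theorem, which is the cleanest route to almost-sure convergence for a recursion of this stochastic-approximation form. First I would track the squared error $V_n(x) \triangleq w_n^2(x)$. Squaring the update in Equation~\ref{eq:applemma} and conditioning on the $\sigma$-field $\mathscr{G}_n$ generated by $P_n$ together with $\alpha_n(x)$ and $\beta_n(x)$, the cross term is $2(1-\alpha_n)w_n\beta_n\,\E[r_n \mid \mathscr{G}_n]$, which vanishes because $\E\{r_n(x)\mid P_n,\beta_n\}=0$ and, by the assumed mutual independence of $\alpha_n$ and $\beta_n$ given $P_n$, the extra conditioning on $\alpha_n$ does not change this conditional mean. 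Using $\E\{r_n^2(x)\mid P_n,\beta_n\}\le C$ for the remaining term then yields
\begin{equation}
\E[V_{n+1}(x)\mid \mathscr{G}_n] \le (1-\alpha_n(x))^2 V_n(x) + C\beta_n^2(x).
\end{equation}

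Next I would put this into almost-supermartingale form by expanding $(1-\alpha_n)^2 = 1 - 2\alpha_n + \alpha_n^2$, giving
\begin{equation}
\E[V_{n+1}(x)\mid \mathscr{G}_n] \le (1+\alpha_n^2(x))V_n(x) - 2\alpha_n(x)V_n(x) + C\beta_n^2(x).
\end{equation}
Here $\sum_n \alpha_n^2(x) < \infty$ and $\sum_n C\beta_n^2(x) < \infty$ w.p.1 by Condition~1, while $2\alpha_n(x)V_n(x)\ge 0$ since $\alpha_n\ge 0$ and $V_n\ge 0$. This is exactly the hypothesis of the Robbins--Siegmund nonnegative almost-supermartingale convergence theorem, whose conclusion gives two facts at once: $V_n(x)$ converges almost surely to a finite limit $V_\infty(x)$, and $\sum_n \alpha_n(x) V_n(x) < \infty$ w.p.1.

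The final step is to force this limit to be zero. Since $\sum_n \alpha_n(x)=\infty$ but $\sum_n \alpha_n(x)V_n(x)<\infty$, the limit $V_\infty(x)$ cannot be positive: if $V_\infty(x)=c>0$, then $V_n(x)>c/2$ for all large $n$, so $\sum_n \alpha_n(x)V_n(x)\ge (c/2)\sum_{n\ge N}\alpha_n(x)=\infty$, a contradiction. Hence $V_\infty(x)=0$, i.e.\ $w_n(x)\to 0$ w.p.1, and because the summability conditions are assumed to hold uniformly over $x$, so does the convergence. The intuition matches the deterministic recursion of Theorem~\ref{theorem:Ztransform}: the factor $(1-\alpha_n)$ acts as a contraction with ``$p$'' strictly below $1$, driving the iterate toward its fixed point, while the vanishing forcing term $\beta_n r_n$ plays the role of a ``$d$'' that here shrinks to zero.

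I expect the main obstacle to be the measure-theoretic bookkeeping in the first step: justifying that the cross term vanishes requires carefully using the mutual conditional independence of $\alpha_n$ and $\beta_n$ given $P_n$ so that conditioning on the enlarged field $\mathscr{G}_n$ preserves $\E[r_n\mid\cdot]=0$, and it requires the step sizes to be handled pathwise, since the summability conditions are stated as holding w.p.1 and the Robbins--Siegmund hypotheses must therefore be checked on almost every sample path rather than merely in expectation. A secondary subtlety is that the two step-size sequences play asymmetric roles --- $\alpha_n$ supplies the contraction that needs $\sum_n\alpha_n=\infty$, whereas only $\sum_n\beta_n^2<\infty$ is used to control the injected noise --- so I would keep them separate throughout rather than collapsing them into a single schedule.
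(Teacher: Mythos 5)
Your proof is correct, but it takes a genuinely different route from the one in the paper. The paper's own argument (following the sketch in Jaakkola et al.) is a normalization/scaling heuristic: divide the process by a large constant $W(x)$ so that the forcing term $\beta_n r_n$ becomes negligible, reduce the recursion to $w_{n+1}=(1-\alpha_n)w_n$, and conclude contraction to zero. That sketch is informal and leans entirely on the cited reference for rigour. You instead run a quadratic Lyapunov argument on $V_n=w_n^2$ and invoke Robbins--Siegmund, which yields both the almost-sure convergence of $V_n$ and the summability $\sum_n \alpha_n V_n<\infty$, and then the divergence $\sum_n\alpha_n=\infty$ forces the limit to be zero. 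This buys a self-contained, genuinely rigorous proof where the paper only gestures at one, and it cleanly separates the roles of the two step sizes ($\alpha_n$ for contraction, $\beta_n^2$ for noise control), which the paper's sketch blurs. The one caveat is the point you already flagged: the hypothesis gives $\E\{r_n(x)\mid P_n,\beta_n\}=0$, but your cross term is conditioned on the enlarged field that also contains $\alpha_n$, so you need $r_n$ to be conditionally independent of $\alpha_n$ given $(P_n,\beta_n)$ (the stated mutual independence of $\alpha_n$ and $\beta_n$ given $P_n$ does not by itself deliver this). That is a standing convention in the Jaakkola et al.\ framework rather than a flaw in your argument, but it should be stated as an explicit assumption if the proof is to be fully self-contained.
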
{}

\begin{proof}
This is the same as Lemma 1 in \cite{jaakkola1994convergence}. The proof is based on that fact that we can divide the process $w_{n+1}$ (both sides of Equation~\ref{eq:applemma}) by a large value $W(x)$ such that $r_n(x) \ll W(x)$. Now the Equation~\ref{eq:applemma} is effectively reduced to

\begin{equation} \label{eq:applelemma2}
    \begin{array}{l}
    w_{n+1}(x) = (1 - \alpha_n(x))w_n(x).
    \end{array}
\end{equation}

\noindent
This is because of the conditions 1 and 2 which guarantees that $\beta_n$ is a fraction and that the variance of the process $r_n$ is finite. 

Now if you consider Equation \ref{eq:applelemma2}, the update is in such a way that the process is equal to a fraction of its previous value. Thus, this process converges to 0 w.p.1.

\end{proof}

\begin{lemm}\label{lemma:appenlemma2}
Consider the stochastic iteration 
\begin{equation}
    \begin{array}{l}
    
    X_{n+1}(x) = G_n(X_n, Y_n,x)         
         
    \end{array}{}
\end{equation}

\noindent
where $G_n$ is a sequence of functions and $Y_n$ is a random process. Let $(\Omega, \mathscr{F}, \mathscr{P})$ be a probability space. If the following are satisfied: 

1) The process is scale invariant. That is w.p.1 for all $\omega \in \Omega$

\begin{equation}
\begin{array}{l}
     G(\beta X_n, Y_n(\omega), x) = \beta G(X_n, Y_n(\omega), x).
\end{array}
\end{equation}

2) If we can keep $||X_n||$ bounded by scaling the process then $X_n$ would converge to a constant $D$ w.p.1 under condition 1.

The original process will converge to a constant $D_1=\frac{D}{\beta_0}$
where $\beta_0$ is the scaling factor that was applied to $||X_n||$.
\end{lemm}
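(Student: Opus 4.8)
The plan is to leverage the scale invariance of condition 1 to show that a suitably rescaled copy of the process satisfies \emph{exactly} the same stochastic iteration as the original, so that the convergence guaranteed for the bounded rescaled process in condition 2 can be pulled back to the original process simply by dividing out the scaling constant.

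First I would fix the scaling factor $\beta_0$ supplied by condition 2 --- the constant for which the rescaled iterates remain bounded --- and define $\tilde X_n \triangleq \beta_0 X_n$. The crucial step is to verify that $\tilde X_n$ obeys the identical recursion $\tilde X_{n+1} = G_n(\tilde X_n, Y_n, x)$. This is where scale invariance does all the work: since
\begin{equation*}
\tilde X_{n+1} = \beta_0 X_{n+1} = \beta_0\, G_n(X_n, Y_n, x),
\end{equation*}
and condition 1 gives $G_n(\beta_0 X_n, Y_n(\omega), x) = \beta_0\, G_n(X_n, Y_n(\omega), x)$ w.p.1, I obtain $\tilde X_{n+1} = G_n(\beta_0 X_n, Y_n, x) = G_n(\tilde X_n, Y_n, x)$. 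Hence $\tilde X_n$ is a genuine solution of the same iteration, differing from $X_n$ only in its (rescaled) initial condition. Now I would invoke condition 2 directly: because $\beta_0$ was chosen precisely so that $\|\tilde X_n\| = \|\beta_0 X_n\|$ stays bounded, condition 2 yields $\tilde X_n \to D$ w.p.1. Since $\beta_0$ is a fixed nonzero constant, independent of $n$ and $\omega$, the almost-sure limit passes through the division, giving
\begin{equation*}
X_n = \frac{\tilde X_n}{\beta_0} \longrightarrow \frac{D}{\beta_0} = D_1 \qquad \text{w.p.1},
\end{equation*}
which is exactly the claimed conclusion.

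The main obstacle, and the only place with real content, is the self-similarity identity $\tilde X_{n+1} = G_n(\tilde X_n, Y_n, x)$: one must be careful that $\beta_0$ is a genuine constant rather than an $n$- or $\omega$-dependent quantity, since only then does it commute both with $G_n$ (through condition 1) and with the almost-sure limit. A secondary subtlety is bookkeeping the ``w.p.1 for all $\omega \in \Omega$'' quantifier appearing in condition 1, so that the rescaled recursion holds along the \emph{entire} trajectory simultaneously; I would handle this by intersecting the countably many probability-one events (one for each step $n$) to retain a single probability-one event on which the recursion for $\tilde X_n$ is valid for every $n$, after which the convergence argument proceeds pathwise.
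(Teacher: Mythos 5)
Your proof is correct and follows essentially the same route as the paper's: exploit scale invariance to show the rescaled process $\beta_0 X_n$ obeys the same iteration, invoke condition 2 to get its convergence to $D$, and divide by the constant $\beta_0$ to conclude $X_n \to D/\beta_0$. The only difference is that the paper (following Jaakkola et al.) additionally sketches the adaptive-rescaling bookkeeping --- arguing that the net effect of the corrections stays finite w.p.1 so that the accumulated factor $\beta_0$ is eventually a fixed constant --- whereas you take $\beta_0$ as fixed from the outset, which is consistent with the lemma's literal statement of condition 2 as a hypothesis.
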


\begin{proof}
The intuition of the proof is that we have a process that starts at a value and its first difference reduces with time till the value stabilizes at a point. Now, if this process is invariant to scaling, we can start the process with a small value (after scaling by a small fraction $\beta_0$) and then we can select a constant over which the $||X_n||$ should not increase (we can scale the whole process if it goes above that constant). Now according to the second condition the bounded process should converge to a constant $D$. Here the relation is $D \leq C$. To show that the net effect of the corrections must stay finite w.p.1, note that if $||X_n||$ converges then for any $\epsilon>0$ there exists $M_{\epsilon}$ such that $||X_n|| \leq D \leq C$ for all $n>M_{\epsilon}$ with probability at least $1-\epsilon$. This implies that the norm of the original process does not go above $C$ after $M_{\epsilon}$. Thus, convergence of $||X_n||$ to constant $D$ would then imply that the scaled version of the original process converges to the same constant $D$ w.p.1 under the bound. Now if we remove the scaling factor the convergence point of the original process is $\frac{D}{\beta_0}$.
\end{proof}

\begin{lemm}\label{lemma:appenlemma3}
A stochastic process $X_n$ which is bounded by the relation
\begin{equation}\label{eq:applelemma3}
    \begin{array}{l}
      |X_{n+1}(x)| = (1 - \alpha)|X_n(x)| + \gamma \beta C_1 + K
    \end{array}
\end{equation}
 converges to a constant $D$ w.p.1 provided 

1) $x \in S$, where S is a finite set. 

2) $\sum_n \alpha = \infty, \sum_n \alpha^2 < \infty$,  
$\sum_n \beta = \infty$, $\sum_n \beta^2 <              \infty$, \\ $\E\{\beta|P_n\} \leq E\{\alpha|P_n\}$, uniformly over $x$ w.p.1, \\ 
     where 
     \begin{equation*}
         \begin{array}{l}
     P_n = \{w_n, w_{n-1}, \ldots, r_{n-1}, r_{n-1}, \ldots, \alpha, \beta\}
         \end{array}
     \end{equation*}
  and $\alpha$, $\beta$ and $\gamma$ are assumed to be non negative.

 3) $K$ is finite.

  4) $\gamma \in (0,1)$   
   
  5) The original process $X_n$ is scale invariant. 
     
     The convergence point of the original process will then be $D = \frac{K + \gamma \beta C_1}{\alpha \beta_0} $,
    where $\beta_0$ is the scaling factor applied to the original process. 
     
     \end{lemm}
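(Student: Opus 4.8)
The plan is to read Equation~\ref{eq:applelemma3} as a stochastic-approximation recurrence whose contraction factor is $1-\alpha$ and whose forcing term is the finite quantity $d \triangleq \gamma\beta C_1 + K$, and then to combine the three preceding results in a pipeline: Theorem~\ref{theorem:Ztransform} to identify the candidate limit, Lemma~\ref{lemma:appenlemma1} to drive the deviation from that limit to zero w.p.1, and Lemma~\ref{lemma:appenlemma2} to transfer the conclusion from the scaled process back to the original one.

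First I would invoke the scale-invariance hypothesis (condition 5) together with Lemma~\ref{lemma:appenlemma2}: after rescaling by the factor $\beta_0$ we may assume $\|X_n\|$ stays bounded, with the understanding that any limit $D^{\star}$ obtained for the bounded process corresponds to $D^{\star}/\beta_0$ for the unscaled one. This is exactly what produces the $\beta_0$ in the denominator of the stated convergence point, so I would carry the scaling through the whole argument and only remove it at the very end.

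Next, treating the deterministic skeleton $|X_{n+1}| = (1-\alpha)|X_n| + d$ as an instance of Equation~\ref{eq:stochasticprocess} with $p = 1-\alpha$ and $d = \gamma\beta C_1 + K$, Theorem~\ref{theorem:Ztransform} (in the case $p \neq 1$, valid for $0 < \alpha < 1$ with $\gamma \in (0,1)$ from conditions 2 and 4) predicts the fixed point $D^{\star} = \frac{d}{1-p} = \frac{\gamma\beta C_1 + K}{\alpha}$. I would then set $|X_n| = D^{\star} + w_n$; substituting into the recurrence, the constant $D^{\star}$ cancels exactly against the forcing term, since $\alpha D^{\star} = d$, leaving $w_{n+1} = (1-\alpha)w_n + (\text{residual})$, where the residual collects both the stochastic noise and the discrepancy arising from the fact that $\alpha$ and $\beta$ vary with $n$ rather than being genuinely constant. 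To conclude $w_n \to 0$ w.p.1 I would cast this into the form of Equation~\ref{eq:applemma} in Lemma~\ref{lemma:appenlemma1}, matching the residual with the $\beta_n r_n$ term and verifying its hypotheses: the Robbins--Monro step-size conditions on $\alpha$ and $\beta$ (condition 2 here), and the zero-conditional-mean and bounded-variance properties of $r_n$. The comparison $\E\{\beta\mid P_n\} \le \E\{\alpha\mid P_n\}$ is precisely what guarantees the forcing term never dominates the contraction, so the candidate limit is attained rather than overshot.

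The main obstacle I expect is handling the time-varying $\alpha$ and $\beta$ cleanly, since $D^{\star} = (\gamma\beta C_1 + K)/\alpha$ is literally constant only if $\beta/\alpha$ is held fixed and, as $\alpha \to 0$, the contraction factor $1-\alpha$ tends to $1$ (the boundary case that Theorem~\ref{theorem:Ztransform} does not cover directly). The delicate step is therefore arguing that the $n$-dependence of the step sizes feeds only the asymptotically negligible residual $w_n$ and not the limit itself; this is where the ratio condition $\E\{\beta\mid P_n\}\le\E\{\alpha\mid P_n\}$ and the square-summability conditions do the real work, allowing those fluctuations to be absorbed into the zero-mean, bounded-variance noise term of Lemma~\ref{lemma:appenlemma1}. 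Once $w_n \to 0$ is established, unwinding the scaling via Lemma~\ref{lemma:appenlemma2} yields the stated limit $D = \frac{K + \gamma\beta C_1}{\alpha\beta_0}$.
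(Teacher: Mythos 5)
Your proposal follows essentially the same route as the paper's proof: read the recurrence as the linear process of Theorem~\ref{theorem:Ztransform} with $p = 1-\alpha$ and forcing term $d = \gamma\beta C_1 + K$, identify the fixed point $d/\alpha$, and transfer back to the unscaled process via the scale-invariance argument of Lemma~\ref{lemma:appenlemma2} to obtain $\frac{K+\gamma\beta C_1}{\alpha\beta_0}$. The one addition you make --- decomposing $|X_n| = D^{\star} + w_n$ and invoking Lemma~\ref{lemma:appenlemma1} to drive the residual to zero --- is a rigor step the paper omits (it treats the recurrence as a purely deterministic linear iteration), and your concern about the time-varying $\alpha$ and $\beta$ is well placed but applies equally to the paper's own argument.
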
{}
     
  \begin{proof}
  
    This is simply an application of Lemma \ref{lemma:appenlemma2}. 
 
    According to condition 5, we have a process that is scale invariant. Let us assume that we applied a scaling factor of $\beta_0$ to that process to get the bound as in Equation \ref{eq:applelemma3}. 
    Now, let us consider the iterative process
    
      \begin{equation}\label{eq:lemma3bound2}
        \begin{array}{l}
             |X_{n+1}(x)| = (1 - \alpha)|X_n(x)| + \gamma \beta C_1 + K.
             
        \end{array}
    \end{equation}

Equation \ref{eq:lemma3bound2} is linear in $|X_n(x)|$ and will converge to a point by Theorem \ref{theorem:Ztransform} w.p.1, to some $X^*(x)$, where $||X^*|| \leq E_1 $, where $E_1$ is a finite arbitrary constant.  From Theorem \ref{theorem:Ztransform} we can see that Equation \ref{eq:lemma3bound2} converges to a constant and hence, Lemma \ref{lemma:appenlemma2} can be applied. 
    The convergence point will be $\frac{K + \gamma \beta C_1}{\alpha}$ from Theorem \ref{theorem:Ztransform} for Equation \ref{eq:lemma3bound2}. If we change the value of the bound $C_1$ then the convergence point will change accordingly. To get the convergence point of the original process, we reapply the scaling factor. Thus, we get that point to be $\frac{K + \gamma \beta C_1}{\alpha \beta_0}$.

  \end{proof}

\begin{theorem}
A random iterative process 
\begin{equation*}
    \begin{array}{l}
         \Delta_{n+1}(x) = (1 - \alpha)\Delta_n(x) + \beta F_n(x) 
         
    \end{array}{}
\end{equation*}

\noindent
converges to a constant $D$ w.p.1 under the following conditions: 

1) $x \in S$ , where S is a finite set. 

2) $\sum_n \alpha = \infty, \sum_n \alpha^2 < \infty$, and \\
    $\sum_n \beta = \infty$, $\sum_n \beta^2 < \infty$, and \\
     $\E\{\beta|P_n\} \leq E\{\alpha |P_n\}$, \textit{uniformly over} $x$ \textit{w.p.1}. 
     
    3) $||\E\{ F_n(x) | P_n, \beta  \} ||_w \leq \gamma ||\Delta_n||_w + K$, where $ \gamma \in (0,1)$ and $K$ is finite. 
    
    4) $\textbf{var}\{F_n(x) | P_n, \beta\} \leq C(1+ ||\Delta_n||_W)^2$, where C is some constant. 
    Here 
$$ P_n = \{X_n, X_{n-1}, \ldots, F_{n-1}, \ldots, \alpha, \beta\}$$ 
    stands for the past at step $n$. $F_n(x)$ is allowed to depend on the past. $\alpha$ and $\beta$ are assumed to be non negative. The notation $||.||$ refers to some weighted maximum norm. 

The value of this constant $D = \frac {\psi C_1 + \beta |K|}{\alpha \beta_0}$
where $\psi \in (0,1)$ and $C_1$ is the constant with which the iterative process is bounded. $\beta_0$ is the scaling factor that was applied to the original process.
\end{theorem}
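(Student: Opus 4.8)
The plan is to recognize the stated result as a shifted variant of the stochastic approximation theorem of Jaakkola et al.~\cite{jaakkola1994convergence}: in their version the additive slack in the contraction condition is zero and the process contracts to $0$, whereas here the finite constant $K$ in condition~3 displaces the fixed point to a nonzero limit $D$. Accordingly, the first step is to separate the conditional mean of $F_n$ from its fluctuation by writing $F_n(x) = \E\{F_n(x)\mid P_n,\beta\} + r_n(x)$, where $r_n(x)$ has zero conditional mean and, by condition~4, conditional variance bounded by $C(1+\|\Delta_n\|_w)^2$. I would then split the iterate as $\Delta_n = \delta_n + w_n$, letting the noise drive $w_{n+1}(x) = (1-\alpha)w_n(x) + \beta\, r_n(x)$ while the deterministic part obeys $\delta_{n+1}(x) = (1-\alpha)\delta_n(x) + \beta\,\E\{F_n(x)\mid P_n,\beta\}$.

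The fluctuation term is dispatched by Lemma~\ref{lemma:appenlemma1}: conditions~1 and~2 of the theorem supply the step-size hypotheses (the summability of $\alpha$ and $\beta$ together with $\E\{\beta\mid P_n\}\le\E\{\alpha\mid P_n\}$), and the zero-mean, bounded-variance structure of $r_n$ provides the remaining hypotheses, so $w_n\to 0$ w.p.1. For the deterministic part I would invoke condition~3, $\|\E\{F_n(x)\mid P_n,\beta\}\|_w \le \gamma\|\Delta_n\|_w + K$, to obtain the scalar majorant $|\delta_{n+1}(x)| \le (1-\alpha)|\delta_n(x)| + \gamma\beta\|\Delta_n\|_w + \beta K$. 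Once $\|\Delta_n\|_w$ is controlled by a finite constant $C_1$, this is exactly the bounded recursion treated in Lemma~\ref{lemma:appenlemma3}, namely $|X_{n+1}(x)| = (1-\alpha)|X_n(x)| + \gamma\beta C_1 + K$; by Theorem~\ref{theorem:Ztransform}, solving $y_{n+1}-(1-\alpha)y_n = \gamma\beta C_1 + K$ gives the fixed point $\frac{\gamma\beta C_1 + K}{\alpha}$, and undoing the scaling factor $\beta_0$ produces a constant of the stated form $D=\frac{\psi C_1 + \beta|K|}{\alpha\beta_0}$, with $\psi$ absorbing the role of $\gamma\beta$.

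The crux of the argument, and the reason the scale-invariance machinery is set up in advance, is that both condition~3 and condition~4 bound quantities in terms of $\|\Delta_n\|_w$, which is not known a priori to be bounded; hence the reduction to the clean recursion of Lemma~\ref{lemma:appenlemma3} and the applicability of Lemma~\ref{lemma:appenlemma1} to $w_n$ are not immediate. I would close this gap with Lemma~\ref{lemma:appenlemma2}: the iteration is scale invariant, so I rescale $\Delta_n$ by a factor $\beta_0$ chosen small enough to keep $\|\Delta_n\|_w$ below a fixed finite bound $C_1$, apply the convergence argument above to the scaled process (for which the variance bound in condition~4 becomes a genuine constant), and then transfer the conclusion back to the original process, which converges to $D/\beta_0$.

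The delicate point I expect to be the main obstacle is verifying that the scaled process genuinely stays bounded throughout, so that the $\|\Delta_n\|_w$-dependent bounds collapse to constants and the two pieces decouple cleanly; this is precisely what Lemma~\ref{lemma:appenlemma2} is designed to certify. Assembling the two convergent components, $w_n\to 0$ from Lemma~\ref{lemma:appenlemma1} and $\delta_n\to D$ from Lemmas~\ref{lemma:appenlemma3} and~\ref{theorem:Ztransform}, then yields $\Delta_n\to D$ w.p.1, completing the proof.
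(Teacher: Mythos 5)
Your proposal is correct and follows essentially the same route as the paper's own proof: the same decomposition $F_n = \E\{F_n\mid P_n,\beta\} + r_n$ and $\Delta_n = \delta_n + w_n$, Lemma~\ref{lemma:appenlemma1} for the noise part, the bounded recursion of Lemma~\ref{lemma:appenlemma3} solved via Theorem~\ref{theorem:Ztransform} for the deterministic part, and Lemma~\ref{lemma:appenlemma2} with the rescaling by $\beta_0$ to justify boundedness. The only detail you elide is that the paper's constant $\psi$ equals $\gamma\beta F$ with $F=(C+1)/C$ arising from absorbing the vanishing $w'_n$ term into the contraction factor, which you correctly anticipate by letting $\psi$ absorb the role of $\gamma\beta$.
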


\begin{proof}

Be defining $r_n(x) = F_n(x) - \E\{F_n(x)|P_n, \beta\}$ we can decompose the iterative process into two parallel processes given by 

\begin{equation}\label{eq:theorempf}
    \begin{array}{l}
         \delta_{n+1}(x) = (1 - \alpha)\delta_n(x) + \beta \E\{F_n(x)|P_n, \beta\}
         \\
         w_{n+1}(x) = (1 - \alpha) w_n(x) + \beta r_n(x)
         
    \end{array}
\end{equation}

\noindent
where $\Delta_n(x) = \delta_n(x) + w_n(x)$. Dividing both the sides of Equation~\ref{eq:theorempf} by $\beta_0$ for each $x$ and denoting $\delta'_n(x) = \delta_n(x)/\beta_0$, $w'_n(x) = w_n(x)/\beta_0$ and $r'_n(x) = r_n(x)/\beta_0$ we can bound the $\delta'_n$ process by condition 3. 

Now we can rewrite the equation pair from condition 3 as

\begin{equation}
    \begin{array}{l}
         |\delta'_{n+1}| \leq (1 - \alpha)|\delta'_n(x)| + \gamma \beta ||\hspace{2mm} |\delta' | + w'_n || + \beta |K|
         \\
         w'_{n+1}(x) = (1 - \alpha)w'_n(x) + \gamma \beta r'_n(x).

    \end{array}
\end{equation}

Let us assume that the $\Delta_n$ process stays bounded. Then the variance of $r_n'(x)$ is bounded by some constant $C$ and thereby $w'_n$ converges to zero w.p.1 according to Lemma \ref{lemma:appenlemma1}. Hence, there exists $M$ such that for all $n>m$, $ || w'_n|| < \epsilon$ with probability at least $1-\epsilon$. This implies that the $\delta'_n$ process can be further bounded by 

\begin{equation}\label{eq:appendbound}
    \begin{array}{l}
         |\delta'_{n+1}| \leq (1-\alpha)|\delta'_n(x)| + \gamma \beta || \delta'_n - \epsilon|| + \beta|K|
    \end{array}{}
\end{equation}

\noindent
with $\textrm{probability} > 1 - \epsilon$. If we choose $C$ such that $\gamma (C+ 1)/C \leq 1$ then for $||\delta'_n|| > C\epsilon$

\begin{equation}
    \begin{array}{l}
         \gamma || \delta'_n + \epsilon|| \leq \gamma (C + 1)/C || \delta'_n||.
    \end{array}
\end{equation}{}

Note that in the above relation we do not need the term $\frac{C+1}{C}$ to be less than 1. We only need the product of this term with $\gamma$ to be less that one. 
Let us represent $F = (C+1)/C $. Now rewriting Equation \ref{eq:appendbound} we get the following bound,

\begin{equation}
    \begin{array}{l}
         |\delta'_{n+1}| \leq (1-\alpha)|\delta'_n(x)| + \gamma \beta F || \delta'_n|| + \beta |K|.
    \end{array}{}
\end{equation}

Let us bound the norm by $C_1$. Then we get the bound as 

\begin{equation}\label{eq:deltaboundequation}
    \begin{array}{l}
         |\delta'_{n+1}| \leq (1-\alpha)|\delta'_n(x)| + \gamma \beta F C_1 + \beta |K|.
    \end{array}{}
\end{equation}

Let us assume that $\delta_n$ is scale invariant (we prove that below). Now we can apply Lemma \ref{lemma:appenlemma3} as this satisfies all the conditions of Lemma \ref{lemma:appenlemma3}.  The original process converges to a constant $D$ w.p.1. Again according to Lemma \ref{lemma:appenlemma3} this constant value is $D = \frac{\gamma \beta F C_1 + \beta |K|}{\alpha \beta_0}$ where $\beta_0$ is the factor with which the original process was scaled. Let us denote a new fraction $\psi = \gamma \beta F$. Thus, the value of $D = \frac {\psi C_1 + \beta |K|}{\alpha \beta_0}$. Now, this guarantees w.p.1 convergence of the original process under the boundedness assumption if that process is scale invariant. Let the constant $D_1 = \frac{\gamma \beta F C_1 + \beta |K|}{\alpha}$ be the point Equation~\ref{eq:deltaboundequation} converges to according to Theorem \ref{theorem:Ztransform}. Since, the value of $D_1$ is very small for a small $K$ (we will show that $K$ is small in the application) and a small $C_1$ (since $C_1$ is arbitrary, we can choose a small $C_1$), we can get $|\delta_n| \approx \delta_n$.

Now we prove the scale invariance condition same as Jaakola et al. \cite{jaakkola1994convergence}. By Condition 4, $r'_n(x)$ can be written as $(1 + || \delta_n + w_n||)s_n(x)$, where $E\{ s_n^2(x)|P_n  \} \leq C$. Let us now decompose $w_n$ as $u_n + v_n $  with 

\begin{equation}
    \begin{array}{l}
         u_{n+1}(x) = (1 - \alpha)u_n(x) + \gamma \beta || \delta'_n + u_n + v_n|| s_n(x) \\ 
         
         v_{n+1}(x) = (1-\alpha) v_n(x) + \gamma \beta s_n(x)

    \end{array}{}
\end{equation}{}

\noindent
and $v_n$ converges to zero w.p.1 by Lemma \ref{lemma:appenlemma1}. Again by choosing $C$ such that $\gamma(C+1)/C < 1$ we can bound the $\delta'_n$ and $u_n$ processes for $||\delta'_n + u_n|| > C\epsilon$. The pair $(\delta'_n, u_n)$ is then a scale invariant process whose bounded version was proven earlier to converge to $D$ w.p.1. This proves the w.p.1 convergence of the triple $\delta'_n, u_n, v_n$ bounding the original process. 

\end{proof}

\end{document}